\newtheorem{theorem}{Theorem}[section]
\newtheorem{proposition}{Proposition}[section]
\newtheorem{lemma}{Lemma}[section]
\newtheorem{corollary}{Corollary}[section]
\newtheorem{conjecture}{Conjecture}[section]
\theoremstyle{remark}
\newtheorem{remark}{Remark}[section]
\theoremstyle{remark}
\newtheorem{example}{Example}[section]
\newcommand\myparagraph[1]{\vspace{6pt}\noindent \textbf{#1}.}
\providecommand{\keywords}[1]{\textbf{\textit{Keywords:}} #1}
\title{New results on the robust  coloring problem}
\author{\normalfont\fontsize{12}{14}\selectfont
Delia Garijo$^1$\and
Alberto M\'arquez$^1$\and
Rafael Robles$^1$
}
\date{}
\begin{document}

\maketitle

\addtocounter{footnote}{1}  \footnotetext{Dep. Matem\'atica Aplicada I, Universidad de Sevilla, Spain. Emails: \{dgarijo, almar, rafarob\}@us.es}

\begin{abstract}
Many variations of the classical graph coloring model have been intensively studied due to their multiple applications; scheduling problems and aircraft assignments, for instance, motivate the \emph{robust coloring problem}. This model gets to capture natural constraints of those optimization problems by combining the information provided by two colorings: a vertex coloring of a graph and the induced edge coloring on a subgraph of its complement; the goal is to minimize, among all proper colorings of the graph for a fixed number of colors, the number of edges in the subgraph with the endpoints of the same color. The study of the robust coloring model has been focused on the search for heuristics due to its NP-hard character when using at least three colors, but little progress has been made in other directions. We present a new approach on the problem obtaining the first collection of non-heuristic results for general graphs; among them, we prove that robust coloring is the  model that better approaches the equitable partition of the vertex set, even when the graph does not admit a so-called \emph{equitable coloring}. We also show the NP-completeness of its decision problem for the unsolved case of two colors, obtain bounds on the associated robust coloring parameter, and  solve a conjecture on paths that illustrates the complexity of studying this coloring model.
\end{abstract}

\keywords{Graph theory; Discrete optimization; Graph coloring; Robust coloring}

\section{Introduction}

Coloring problems deal with partitioning the objects of a graph into  classes according to different criteria, and appear in many areas with seemingly no connection with coloring: time tabling and scheduling \cite{BMMPQ07}, frequency assignment \cite{SH97}, register allocation \cite{aplicacion2}, printed circuit board testing \cite{GJS76}, pattern matching \cite{O86} or analysis of biological and archeological data \cite{aplicacion1}; see also \cite{aplicacioningenieros} for descriptions of the first four mentioned applications.

 The classical coloring problem uses proper colorings: a \emph{$k$-proper coloring} is an assignment of $k$ colors to the vertices of a graph so that no edge has both endpoints of the same color. This is an NP-hard problem for three or more colors (and polynomial for two colors), which has received a large attention in the literature, not only for its real world applications, but also for its theoretical aspects and computational difficulty; see, for instance \cite{CZ20, GJ79}.

Other different criteria have been considered in coloring problems as the \emph{equitable} partition, that is, partitioning the vertex set of a graph into equal or almost equal subsets. Formally, a graph is \emph{equitable $k$-colorable} if it admits a $k$-proper coloring such that the cardinalities of any two color classes differ by at most one.

Equitable coloring of graphs was introduced by Meyer \cite{M73} for modeling problems in an operations research context, and has since been widely investigated due to its many practical applications in sequencing and scheduling; see for example \cite{FJK16} and the references therein, in particular, \cite{F06} for a specific application in scheduling. As explained in \cite{FJK16}, this type of coloring models situations in which one desires to split a system into equal or almost equal conflict-free subsystems. However, not every system admits such a division, and other criteria are needed in order to approach as much as possible the equitable partition; here arises the \emph{robust coloring problem}  (RCP, for short) that  can be stated as follows:

\begin{quote}
    Let $G=(V(G),E(G))$ be an unweighted and simple graph with chromatic number $\chi(G)$. Given a subgraph $H$ of its complement graph $\overline{G}$ and a positive integer $k\geq \chi(G)$, find a $k$-proper coloring $\phi$ of $G$ that minimizes, over all those $k$-proper colorings, the number of monochromatic edges\footnote{\emph{Monochromatic} edges are those whose endpoints have the same color; otherwise the edges are called \emph{bichromatic}.} in the induced coloring on $H$. We say that $\phi$ is a \emph{$k$-robust coloring of $(G,H)$}, and $m(G,H,k)$ is such minimum.
\end{quote}
The original statement of the problem introduced in \cite{YR03} considers $H$ to be a weighted graph, and the goal is to minimize the sum of the weights of the monochromatic edges. Our statement establishes all edge weights in $H$ to be 1, as both versions are equivalent for most of the questions addressed in this work, and for those that are not, our arguments can be adapted. We will be more precise on this issue in each of the sections below, once the different problems that we approach have been described in detail.

Applications of robust coloring are summarized in \cite{YR03, Lim05}; among them, it highlights applications to timetabling and scheduling problems, geographical maps, and aircraft assignment. For example, the aircraft assignment problem can be modeled as a graph coloring problem where each vertex in the graph $G$ represents a flight route and each color represents one aircraft. There is an edge between two vertices if an aircraft cannot serve the two flight routes represented by the two vertices. If flight-delays are taken into account, the overlap relationship between flight routes changes, and this information is captured by the edges of a new graph $H$. The monochromatic edges in $H$ represent cancelled flights, and the goal is to minimize the number of flights that must be cancelled when there are $k$ aircrafts.

\myparagraph{Related work}
As it was mentioned before, the RCP was introduced in \cite{YR03}, where the authors also describe several applications of this coloring model, and conclude that the decision problem is NP-complete for $k\ge 3$. They also present a binary programming model and outline a genetic algorithm. Due to their complexity result, most papers in the topic search for heuristics. In \cite{WX13} the authors develop several meta-heuristics to solve the RCP including genetic algorithm, simulated annealing and tabu search. A column generation-based heuristic algorithm is presented in \cite{YSH17}. A study on the robust aircraft assignment is developed in
\cite{Lim05}; the authors propose new techniques for an approximate solution of the problem, such as the partition based encoding and several meta-heuristics (local search, simulated annealing, tabu search and hybrid method).
Other references in this direction are \cite{ABH14, DPPP14}. Almost no progress has  been  made  in  other directions: we can only refer the reader to \cite{BRM05} for a theoretical study on the RCP for the case of $G$ and $H$ being paths on the same set of vertices and the value $k=3$. The authors present an exact but exponential algorithm to find a $3$-robust coloring of $(G,H)$, and a randomized algorithm and a greedy algorithm analyzing the cost of their output. They also apply their randomized algorithm to obtain bounds on the problem of maximizing the sum of the weights (costs) of the monochromatic edges of $H$, and pose two conjectures related with bounding the number of monochromatic edges of $H$ induced by a $3$-coloring of $G$.

\myparagraph{Our results} We present a non-heuristic approach to the RCP for general graphs. In
Section \ref{sec:equitable}, we first prove that robust coloring is the model that better approaches the equitable partition even when the graph has no equitable coloring; if the graph admits such a coloring, we extend the known connection between robust colorings of $(G, \overline{G})$ and equitable colorings of $G$ to a broad class of subgraphs $H$.
 The NP-complete nature of the decision problem of robust coloring is then established in Section \ref{sec:complexity} for the unsolved case of two colors, in contrast to the corresponding decision problems for classical graph coloring and equitable coloring. %, which are polynomial in that case.
 Section \ref{sec:greedy} focuses on the  modifications required by the greedy algorithm for classical graph coloring in order to guarantee an optimal solution (for some vertex ordering) when dealing with robust colorings and equitable colorings. We introduce the \emph{robust-greedy algorithm} as the variation satisfying that property. This algorithm is  key in Section \ref{sec:bounds}, where we first obtain a tight upper bound on $m(G,H,k)$ for arbitrary graphs $G$ and subgraphs $H$, and then for graphs defined as \emph{$\alpha$-greedy orientable} (this includes trees, some series-parallel graphs and bipartite outerplanar graphs).
In Section \ref{subsec:paths}, we prove in the affirmative a conjecture posed by L\'opez-Bracho et al.~\cite{BRM05} on $m(G,H,3)$ for $G$ and $H$ being two paths on the same  set of vertices, and extend the result to $H$ being a vertex-disjoint union of paths;  the solution of this conjecture illustrates very well the complexity of dealing with robust colorings.

\vspace{0.15 cm}

We assume $k<{\rm min}\{|V(G)|, \chi(G)\chi(H)\}$ for otherwise $m(G,H,k)=0$ as there are always enough colors to obtain a proper coloring in $G$ and also in $H$. In addition, for short, we omit the term proper and simply say coloring when no confusion may arise.

\section{Robust colorings as an approach to equitable colorings}\label{sec:equitable}

When a graph $G$ has an equitable coloring, one obtains a uniform distribution of colors on the vertices, and the question is whether this decreases the number of monochromatic edges in a subgraph $H$ of $\overline G$, but this question can not be answered for an arbitrary $H$ as the answer would completely depend on its structure. Thus, it makes sense to study the connection between equitable colorings and robust colorings when $H$ is the whole  $\overline{G}$. In this section, we go further by setting $H$ as the induced subgraph in $\overline{G}$ by a subset of vertices $S\subseteq V(G)$. We denote this graph as $\overline{G}[S]$, see Figure  \ref{k33} for some examples.

For graphs $G$ that admit an equitable coloring, it is known that a $k$-coloring of $G$ is equitable if and only if it is a $k$-robust coloring of $(G, \overline{G})$; this can be deduced, for instance, from \cite[Proposition 3.1]{YR03}. We first prove that, even if $G$ does not admit an equitable coloring, robust coloring is the model that better approaches the uniform distribution of colors.%, and in case of existence we extend the previous connection between equitable colorings and robust colorings to the graph $\overline{G}[S]$.

Let $S\subseteq V(G)$, and consider the color classes $C_1, \ldots ,C_k$ that partition $V(G)$ by a $k$-coloring $\phi$ of $G$ (each class contains the vertices with the corresponding color). Let $P_{\phi, S}=(n_1, \ldots, n_k)$ be the partition of $|S|$ associated to the coloring $\phi$, where $n_i=|C_i\cap S|$. We say that $\phi$  is \emph{equitable over} $S $ if $P_{\phi, S}$ satisfies that $|n_i-n_j| \leq 1$ for $1\leq i, j \leq k$; with some abuse of the language, we may indistinctly say that $S$ \emph{admits} an equitable $k$-coloring (see Figure \ref{k33}).
 There are monochromatic edges in the induced edge coloring of $\overline{G}[S]$ if and only if $n_i>1$ for some $1\leq i\leq k$; each pair of vertices of $S$ in the same color class $C_i$ determines a monochromatic edge, and so the total number of monochromatic edges induced by the partition $P_{\phi, S}$ in $\overline{G}[S]$, denoted by $m(P_{\phi, S})$, is
\begin{equation}\label{eq:mp}
m(P_{\phi, S}) = \sum_{1\leq i \leq k, n_i>1} {n_i \choose 2},\end{equation}
which can be rewritten as
$$m(P_{\phi, S})=\frac 12 \sum_{i=1}^{k} (n_i^2-n_i)=\frac 12 \sum_{i=1}^{k} ((n_i-\frac 12)^2-\frac 14)= -\frac k8+\frac 12 d^2(P_{\phi, S},Q),$$
where $d(P_{\phi, S}, Q)$ is the Euclidean distance between the point $P_{\phi, S}=(n_1, \ldots, n_k)$ and the point $Q=(\frac 12,\dots,\frac 12)$ in a $k$-dimensional space. (Note that, in the above equation, we include in the sum the case  $n_i=1$ since $n_i^2-n_i=0$.)

\begin{figure}[ht]
	\centering
\includegraphics[width=0.99\textwidth]{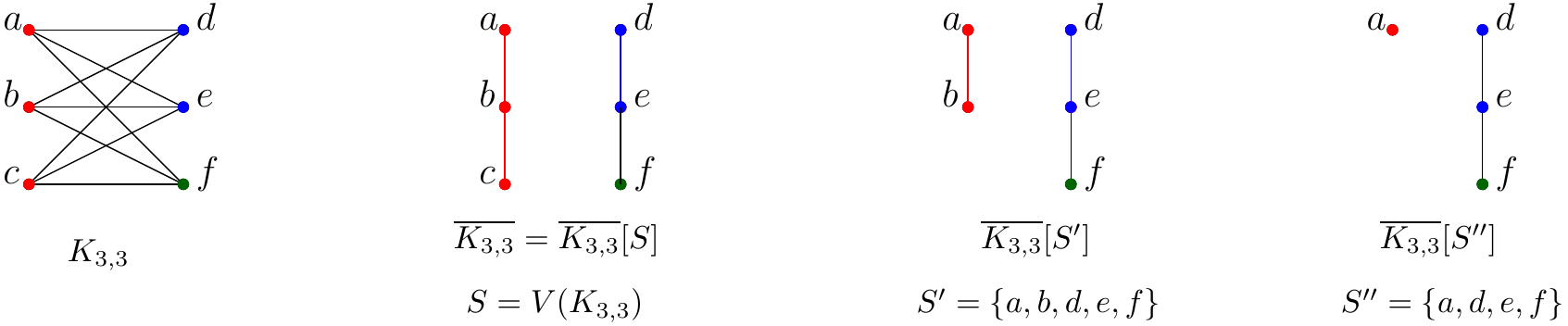}
	\caption{A $3$-coloring of $K_{3,3}$, and the induced edge colorings in $\overline{K_{3,3}}$,  $\overline{K_{3,3}}[S']$, and $\overline{K_{3,3}}[S'']$; the sets $S'$ and $S''$ admit an equitable $3$-coloring, and $S$ does not.}
 \label{k33}
\end{figure}

We can thus conclude that $m(P_{\phi, S})$ is mimimum over all $k$-colorings $\phi$ of $G$ if and only if $d(P_{\phi, S}, Q)$ is minimum.
Hence, minimizing the number of monochromatic edges in the induced coloring of $\overline{G}[S]$ is equivalent to finding the point $P_{\phi, S}$ in the hyperplane described by the equation $n_1+n_2+ \ldots +n_k=|S|$
that minimizes the distance to $Q$.
Further, $d(P_{\phi, S}, Q)$ is minimum if and only if $d(P_{\phi, S}, P_{|S|,k})$ is minimum, where $P_{|S|,k}=(\frac{|S|}{k}, \ldots, \frac{|S|}{k})$ is the orthogonal projection of $Q$ onto that hyperplane.
Observe that $P_{|S|,k}$ represents the ideal uniform distribution into the $k$ color classes. This distribution may not exist ($|S|$ might not even be divisible by $k$) but we have shown that the $k$-robust coloring is the closest to it under the Euclidean metric; this is the content of the following theorem.

\begin{theorem}\label{thm:ideal}
%Let $G$ be a graph, and let $\overline{G}[S]$ be the induced subgraph in $\overline{G}$ by a subset of vertices $S\subseteq V(G)$.
A $k$-coloring $\phi$ of a graph $G$ is a $k$-robust coloring of $(G, \overline{G}[S])$ if and only if $d(P_{\phi}, P_{|S|, k})$ is minimum over all $k$-colorings of $G$.
\end{theorem}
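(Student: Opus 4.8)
The plan is to prove the biconditional as a chain of equivalences, most of whose links have already been assembled in the discussion preceding the statement (I write $P_{\phi,S}$ for the point denoted $P_\phi$ in the statement). First I would observe that, by the definition of robust coloring, $\phi$ is a $k$-robust coloring of $(G,\overline{G}[S])$ if and only if it minimizes, over all $k$-colorings of $G$, the number of monochromatic edges of $\overline{G}[S]$, and this number is precisely $m(P_{\phi,S})$. The established identity $m(P_{\phi,S}) = -\frac{k}{8} + \frac{1}{2}d^2(P_{\phi,S},Q)$ then expresses $m(P_{\phi,S})$ as a strictly increasing affine function of $d^2(P_{\phi,S},Q)$; since $t\mapsto\sqrt t$ is increasing, minimizing $m(P_{\phi,S})$ is equivalent to minimizing $d(P_{\phi,S},Q)$.

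The remaining and only substantive step is to swap the target point $Q$ for its projection $P_{|S|,k}$. Every feasible $P_{\phi,S}=(n_1,\dots,n_k)$ satisfies $n_1+\dots+n_k=|S|$, so all such points live in the hyperplane $\Pi:\sum_i x_i=|S|$, which has normal direction $(1,\dots,1)$. I would verify that $P_{|S|,k}=(\frac{|S|}{k},\dots,\frac{|S|}{k})$ is the orthogonal projection of $Q$ onto $\Pi$: it lies in $\Pi$ because $k\cdot\frac{|S|}{k}=|S|$, the displacement $P_{|S|,k}-Q$ is a scalar multiple of $(1,\dots,1)$ and hence parallel to the normal, while the in-plane displacement $P_{\phi,S}-P_{|S|,k}$ has coordinate sum $|S|-|S|=0$ and is therefore orthogonal to $(1,\dots,1)$. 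Applying the Pythagorean theorem to the right triangle with legs $P_{\phi,S}-P_{|S|,k}$ and $P_{|S|,k}-Q$ gives
$$d^2(P_{\phi,S},Q)=d^2(P_{\phi,S},P_{|S|,k})+d^2(P_{|S|,k},Q).$$
Because $d^2(P_{|S|,k},Q)$ is a constant independent of $\phi$, minimizing $d(P_{\phi,S},Q)$ over all $k$-colorings coincides with minimizing $d(P_{\phi,S},P_{|S|,k})$. Concatenating the three equivalences yields the theorem.

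There is no genuine obstacle here, since the decisive algebraic identity was derived before the statement; the only place demanding care is the projection step, where one must confirm both that $P_{|S|,k}$ actually lies in $\Pi$ and that the in-plane displacement is orthogonal to the out-of-plane displacement, so that Pythagoras applies and the ``distance to $Q$'' and ``distance to $P_{|S|,k}$'' objectives differ only by the additive constant $d^2(P_{|S|,k},Q)$. If one preferred to avoid geometry, an equivalent purely algebraic route would expand $d^2(P_{\phi,S},P_{|S|,k})=\sum_i(n_i-\frac{|S|}{k})^2=\sum_i n_i^2-\frac{|S|^2}{k}$ and compare it directly with $m(P_{\phi,S})=\frac12(\sum_i n_i^2-|S|)$, the two differing only by a constant; I would keep this as a sanity check but present the projection argument as the cleaner one.
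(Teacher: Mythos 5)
Your proposal is correct and follows essentially the same route as the paper, whose proof is exactly the discussion preceding the statement: the identity $m(P_{\phi,S})=-\tfrac{k}{8}+\tfrac12 d^2(P_{\phi,S},Q)$, followed by replacing $Q$ with its orthogonal projection $P_{|S|,k}$ onto the hyperplane $\sum_i n_i=|S|$. The only difference is that you spell out the Pythagorean/orthogonality verification (and an algebraic double-check) that the paper leaves implicit in the phrase ``orthogonal projection,'' which is a welcome but not substantively different addition.
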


Consider now a $k$-robust coloring $\phi$ of $(G, \overline{G}[S])$ and the partition $P_{\phi, S}=(n_1,\dots, n_k)$. Suppose that $n_i<n_j$ for some $1\leq i,j\leq k$, and let
 $P=(n_1, \ldots, n_i+1, \ldots ,n_j-1, \ldots n_k)$; this is a $k$-partition of $|S|$ that is not necessarily associated to a $k$-coloring but, with some abuse of notation, we set $m(P)$ as
$$m(P)= {n_i+1 \choose 2} +{n_j-1 \choose 2}+\sum_{\ell \neq i,j} {n_{\ell} \choose 2}.$$
By equation (\ref{eq:mp}),  we have $m(P_{\phi, S})-m(P) = n_j-n_i-1\geq 0$, and so $m(P_{\phi, S})\geq m(P)$. Therefore, any partition of $|S|$ satisfying that any two of its elements differ by at most one  is a minimum of the function $m(\cdot)$ over all $k$-partitions of $|S|$. Thus, we have proved the following proposition, where we also give the minimum value of the function $m(\cdot)$, which is straightforward.

\begin{proposition}\label{thm:cotainf} %Let $G$ be a graph, and let $\overline{G[S]}$ be the induced subgraph in $\overline{G}$ by a subset $S\subseteq V(G)$.
For every $k\geq\chi(G)$ it holds that:
$$m(G,\overline{G}[S],k)\geq (k-r) {s \choose 2} + r {s+1 \choose 2},$$
where $s=\left\lfloor \frac{|S|}{k}\right\rfloor$ and $r=|S|-sk$. Moreover, the bound is tight if and only if $S$ admits an equitable $k$-coloring.
\end{proposition}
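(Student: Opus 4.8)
The plan is to piggyback on the exchange analysis carried out immediately before the statement, which already shows that the combinatorial function $m(\cdot)$ on $k$-partitions of $|S|$ is minimized precisely by the equitable partitions. First I would make explicit the distinction between two minimizations: by definition $m(G,\overline{G}[S],k)$ is the minimum of $m(P_{\phi,S})$ taken over all \emph{proper} $k$-colorings $\phi$ of $G$, whereas the quantity controlled by the exchange argument is the minimum of $m(\cdot)$ over \emph{all} $k$-partitions of $|S|$, regardless of realizability by a proper coloring. Since every $P_{\phi,S}$ is in particular a $k$-partition of $|S|$, the colorings contribute a subset of all partitions, and minimizing over a smaller set can only increase the value; hence
$$m(G,\overline{G}[S],k)\ \geq\ \min_{P}\, m(P),$$
where $P$ ranges over all $k$-partitions of $|S|$.

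Next I would evaluate this global minimum in closed form. Writing $|S|=sk+r$ with $s=\lfloor |S|/k\rfloor$ and $0\le r<k$, an equitable partition consists of exactly $r$ parts equal to $s+1$ and $k-r$ parts equal to $s$, and substituting into~(\ref{eq:mp}) gives the value $(k-r){s \choose 2}+r{s+1 \choose 2}$. By the exchange step preceding the proposition (replacing $(\dots,n_i,\dots,n_j,\dots)$ with $n_i<n_j$ by $(\dots,n_i+1,\dots,n_j-1,\dots)$ lowers $m(\cdot)$ by $n_j-n_i-1\ge 0$), every equitable partition attains this value and is a global minimizer, so the right-hand side above equals this expression. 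This establishes the inequality of the proposition.

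For the tightness I would argue both directions. If $S$ admits an equitable $k$-coloring, then some proper coloring $\phi$ produces an equitable $P_{\phi,S}$, so $m(P_{\phi,S})$ equals the stated bound and, combined with the inequality, forces equality. Conversely, suppose equality holds; then there is a proper coloring $\phi$ with $m(P_{\phi,S})$ exactly equal to the bound. The crucial observation is that the bound is attained \emph{only} by equitable partitions: if $P_{\phi,S}$ were not equitable there would be indices with $n_j-n_i\ge 2$, and then the exchange step would strictly decrease $m(\cdot)$ (by $n_j-n_i-1\ge 1$), placing $P_{\phi,S}$ strictly above the minimum and contradicting equality. Hence $P_{\phi,S}$ is equitable and $S$ admits an equitable $k$-coloring.

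The argument is short precisely because the heavy lifting — that equitable partitions are the unique minimizers of $m(\cdot)$ — has already been done by the earlier exchange analysis. The only point genuinely requiring care is the strictness used in the converse direction of the tightness claim: one must record that the inequality $n_j-n_i-1\ge 0$ becomes strict whenever the partition is not equitable, which is exactly what rules out any non-equitable partition sitting at the minimal value and thereby pins equality to the existence of an equitable $k$-coloring of $S$.
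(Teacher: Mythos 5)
Your proposal is correct and follows essentially the same route as the paper: the paper proves this proposition exactly via the exchange argument preceding its statement (replacing $(n_i,n_j)$ with $n_i<n_j$ by $(n_i+1,n_j-1)$ decreases $m(\cdot)$ by $n_j-n_i-1\geq 0$), concluding that equitable partitions minimize $m(\cdot)$ over all $k$-partitions of $|S|$ and that the minimum value is the stated expression. Your write-up merely makes explicit two points the paper leaves implicit --- that minimizing over proper colorings dominates minimizing over all partitions, and that strictness of the exchange inequality for non-equitable partitions yields the ``only if'' direction of tightness --- which is a faithful filling-in rather than a different argument.
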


The preceding lower bound is the number of monochromatic edges in the induced edge coloring of $\overline{G}[S]$ by an equitable $k$-coloring over $S$, if it exists. Thus, we extend the known connection between equitable colorings and robust coloring to the graph $\overline{G}[S]$.

\begin{theorem}\label{col:rec}
Let $S\subseteq V(G)$ be a subset of vertices that admits an equitable $k$-coloring. Then,
a $k$-coloring $\phi$ of $G$ is equitable over $S$ if and only if $\phi$ is a $k$-robust coloring of $(G, \overline{G}[S])$.
\end{theorem}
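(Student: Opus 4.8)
The plan is to derive both implications from Proposition \ref{thm:cotainf} together with the displacement computation that immediately precedes it, using the quantity $M := (k-r){s \choose 2} + r{s+1 \choose 2}$ as a common reference point. The crucial fact I would record first is that $M$ is the minimum of the function $m(\cdot)$ over \emph{all} $k$-partitions of $|S|$ (not merely those realizable by a proper $k$-coloring of $G$), and that this minimum is attained \emph{exactly} by the equitable partitions. Indeed, the displacement identity $m(P_{\phi, S}) - m(P) = n_j - n_i - 1$ shows that whenever some pair satisfies $n_j - n_i \geq 2$ the value strictly decreases, so every minimizing partition must have all parts differing by at most one, and conversely every such partition has exactly $r$ parts equal to $s+1$ and $k-r$ parts equal to $s$, giving value $M$ via (\ref{eq:mp}).

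For the forward implication, suppose $\phi$ is equitable over $S$. Then $P_{\phi, S}$ has exactly $r$ parts equal to $s+1$ and $k-r$ parts equal to $s$, so $m(P_{\phi, S}) = M$ by (\ref{eq:mp}). Since $S$ admits an equitable $k$-coloring, Proposition \ref{thm:cotainf} guarantees the lower bound is tight, that is $m(G, \overline{G}[S], k) = M$; hence $\phi$ attains the minimum number of monochromatic edges in $\overline{G}[S]$ and is therefore a $k$-robust coloring of $(G, \overline{G}[S])$.

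For the converse, suppose $\phi$ is a $k$-robust coloring of $(G, \overline{G}[S])$. By definition $m(P_{\phi, S}) = m(G, \overline{G}[S], k)$, and tightness (again from Proposition \ref{thm:cotainf}, invoking the hypothesis that $S$ admits an equitable coloring) yields $m(G, \overline{G}[S], k) = M$. Thus $P_{\phi, S}$ is a $k$-partition of $|S|$ whose $m$-value equals the global partition minimum $M$, and by the first observation it must be equitable; consequently $\phi$ is equitable over $S$.

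The step I expect to require the most care is the converse, precisely because the displacement argument operates on abstract partitions that need not arise from any proper $k$-coloring of $G$, so one cannot simply produce a strictly better coloring from a non-equitable $\phi$. The hypothesis that $S$ admits an equitable coloring is exactly what bridges this gap: it forces the coloring-constrained minimum $m(G, \overline{G}[S], k)$ to coincide with the unconstrained partition minimum $M$, after which the characterization of minimizing partitions closes the argument without ever needing to realize the auxiliary partition $P$ by a coloring.
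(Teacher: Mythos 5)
Your proposal is correct and takes essentially the same route as the paper, which presents Theorem \ref{col:rec} as an immediate consequence of Proposition \ref{thm:cotainf} and the displacement computation preceding it (the paper gives no separate proof beyond noting that the lower bound equals the count achieved by an equitable coloring over $S$). Your write-up simply makes explicit the two ingredients the paper leaves implicit — that the partition-minimizers of $m(\cdot)$ are exactly the equitable partitions, and that the hypothesis on $S$ forces the coloring-constrained minimum to coincide with the unconstrained one — so no gap or divergence to report.
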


We next illustrate the previous results with some examples.

\begin{example}
  Figure \ref{cometa-eps-converted-to} shows a $4$-equitable coloring of a graph $G$, which by  the relation between equitable colorings and robust colorings, is a $4$-robust coloring of $(G, \overline{G})$. The problem arises when a graph has no equitable coloring for some value $k$. This happens to
the complete bipartite graph $K_{3,3}$ when setting $k=3$ since any $3$-coloring generates color classes $C_1, C_2, C_3$ of cardinality $1,2,3$, respectively. See Figure  \ref{k33}. Theorem \ref{thm:ideal} establishes that the closest partition of the vertex set to an equitable partition is given by a $3$-robust coloring of $(G, \overline{G})$. Further, Theorem \ref{col:rec} allows us to study the scenario for subsets $S$ of vertices in $K_{3,3}$. All subsets $S$ containing at most two vertices of class $C_3$ admit an equitable $3$-coloring. Moreover, $m(G, \overline{G}[S],3)$ is either $1$ or $2$ (depending on the set $S$ considered).
\end{example}

\begin{example} The argument to prove Proposition \ref{thm:cotainf} can be used to obtain robust colorings of $(G, \overline{G}[S])$ when the graph $G$ does not admit equitable colorings. For instance, the wheel graph $W_{1,7}$ with $8$ vertices has no equitable colorings as there is always a color class of cardinality 1 (determined by the central vertex). For $k=4$, the above mentioned argument establishes that the partition $(1,1,3,3)$ can not be associated to a $4$-coloring of $W_{1,7}$ that is a $4$-robust coloring of $(W_{1,7}, \overline{W_{1,7}})$, but $(1,2,2,3)$ gives such a robust coloring.  \end{example}

%\delia{podemos poner algún ejemplo de grafo que no tenga equitable y dar la robusta pero que no sea sobre el complementario sino $\overline{G}[S]$ con $S\neq V$?}

%\rafa{Mejor que el ejemplo que te di esta mañana donde G tenía una única coloración propongo este otro: Consideramos G el grafo bipartito completo $K_{3,3}$ y tres colores ($k=3$). El grafo no es 3-equitable coloreable pues siempre resulta $|C_1|=1,|C_2|=2,|C_3|=3$. Ahora, si tomamos como $S$ dos vértices de un lado del bipartito y otros dos vértices del otro lado del bipartito, tenemos que $m(G,S,3)=1$. Lo que hace que este ejemplo me resulte más atractivo es que el grafo es más sencillo, no hay una única 3-coloración posible y además tomamos $k=3$ mayor que el número cromático 2 del grafo.}

\subsection{Complexity of robust colorings}\label{sec:complexity}

The classical graph coloring decision problem is NP-complete for $k\geq 3$ colors, but polynomial for $k=2$ \cite{GJ79}. The same happens for equitable $k$-coloring \cite{FJK16}. Now, consider the following problem:

% The problem of deciding whether a general graph has an equitable $k$-coloring with a given number of colors $k\geq 3$ is NP-complete \cite{FJK16} \delia{según dice Alberto, polynomial para $k=2$, esto habría que aclararlo}. This is obtained by a straightforward reduction from graph coloring to equitable coloring by adding sufficiently many isolated vertices to the graph.
%Now, consider the following problem:

\vspace{0.2cm}

\noindent {\sc \small{Robust-Coloring}}

\noindent {\sc \small{Instance:}}  A graph $G$, a subgraph $H$ of $\overline{G}$, a positive integer $k\leq |V(G)|$, and $\overline{m}\in \mathbb{N}$.

\noindent {\sc \small{Question:}}  Does a $k$-coloring of $G$ exist such that the number of monochromatic edges in the induced coloring on $H$ is at most $\overline{m}$?

\vspace{0.2cm}

\noindent A reduction to graph coloring shows the NP-completeness of {\sc \small{Robust-Coloring}} for $k\geq 3$ \cite[Proposition 3.2]{YR03}. We next prove that, surprisingly, this decision problem is NP-complete even for $k=2$.

\begin{theorem}\label{th:NProbust}
{\sc \small{Robust-Coloring}} is an NP-complete problem for $k=2$.
\end{theorem}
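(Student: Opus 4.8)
The plan is to show membership in NP and then reduce a classical NP-complete problem to {\sc Robust-Coloring} with $k=2$. Membership is immediate: a proper $2$-coloring $\phi$ of $G$ is a polynomial-size certificate, and one verifies in linear time both that $\phi$ is proper and that it induces at most $\overline{m}$ monochromatic edges on $H$.

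For hardness, the key observation is that, although testing $2$-colorability of $G$ is polynomial, the robust problem optimizes over \emph{all} proper $2$-colorings of $G$, and the number of these is $2^{c}$, where $c$ is the number of connected components of $G$ (each connected bipartite component admits a unique bipartition up to interchanging its two sides). This per-component binary freedom is exactly a family of Boolean variables, which suggests reducing from {\sc Simple Max Cut} \cite{GJ79}: given a simple graph $G_0=(V_0,E_0)$ and an integer $t$, decide whether $G_0$ admits a cut with at least $t$ edges.

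First I would take $G$ to be the disjoint union of one edge $a_vb_v$ for each $v\in V_0$, so that $G$ is bipartite with $\chi(G)=2$ and each proper $2$-coloring $\phi$ corresponds bijectively to an assignment $x\in\{0,1\}^{V_0}$, where $x_v$ is the color of $a_v$ and $b_v$ gets the opposite color. Then I would let $H$ consist of the edge $a_ua_v$ for every $uv\in E_0$; as $a_u$ and $a_v$ belong to different components of $G$, they are non-adjacent in $G$, so $H\subseteq\overline{G}$, and $H$ is $G_0$ together with the isolated vertices $b_v$. The edge $a_ua_v$ is monochromatic under $\phi$ exactly when $x_u=x_v$, hence the number of monochromatic edges of $H$ equals $|E_0|$ minus the size of the cut of $G_0$ given by the partition $(\{v:x_v=0\},\{v:x_v=1\})$. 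Setting $\overline{m}=|E_0|-t$, a proper $2$-coloring of $G$ with at most $\overline{m}$ monochromatic edges on $H$ exists if and only if $G_0$ has a cut of size at least $t$.

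Finally I would confirm that the construction is a legitimate, polynomial-time reduction: $G$, $H$, $k=2$ and $\overline{m}$ are produced in polynomial time, the equivalence of YES-instances is precisely the monochromatic/cut identity, and (assuming without loss of generality $|V_0|\ge 2$ and $E_0\neq\emptyset$) the instance obeys the running convention since $k=2<2|V_0|=|V(G)|$ and $\chi(G)\chi(H)=2\chi(G_0)\ge 4>2=k$. I expect the main difficulty to be conceptual rather than computational: one must realize that the hardness does not come from coloring a single connected graph but from the exponential choice of orientations among the components of a disconnected $G$, and then design $G$ and $H$ so that these orientations encode the cut variables while keeping $H$ inside $\overline{G}$. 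I would also note that, since the unweighted formulation used here is a special case of the weighted problem of \cite{YR03}, the same reduction establishes NP-completeness of the weighted version as well.
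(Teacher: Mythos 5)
Your proof is correct and takes essentially the same route as the paper: both establish hardness by a polynomial-time reduction from \textsc{Simple-Max-Cut}, exploiting that proper $2$-colorings of a suitably trivial $G$ are free binary assignments under which the monochromatic edges of $H$ are exactly the non-cut edges, so that $m(G,H,2)\le |E_0|-t$ holds if and only if a cut of size at least $t$ exists. The only difference is cosmetic: the paper takes $G$ to be the edgeless graph on the vertex set of the Max Cut instance (so $H$ is that instance itself inside $\overline{G}$), whereas you double each vertex into a matching edge to force $\chi(G)=2$; both gadgets carry the identical equivalence.
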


\begin{proof}
The problem is in NP since one can compute in polynomial time the number of monochromatic edges induced in $H$ by a given $k$-coloring of $G$, and check whether this number is at most $\overline{m}$. Consider now the following NP-complete problem \cite{GJS76(1)}:

\vspace{0.2cm}

\noindent {\sc \small{Simple-Max-Cut}}

\noindent {\sc \small{Instance:}}  A graph $G=(V,E)$, $\ell\in \mathbb{N}$.

\noindent {\sc \small{Question:}} Does there exist a set $S\subset V$ such that $|\{su\in E \, | \, s\in S, u\in V-S\}| \geq \ell$?

\vspace{0.2cm}

\noindent We next reduce {\sc \small{Simple-Max-Cut}} to our decision problem, thus proving the result. Let $G=(V,E)$ be a graph with $n$ vertices, and let $\ell \in \mathbb{N}$. Let $V_n$ be the trivial graph with $n$ vertices (i.e., it has no edges); the graph $G$ is a subgraph of $\overline{V_n}$. Any $2$-coloring of $V_n$ induces a partition of $V$ into two subsets $S$ and $V-S$ such that the bichromatic edges in the induced coloring on $G$ are precisely the set $\{su\in E \, | \, s\in S, \, u\in V-S\}$. Therefore,
$$|\{su\in E \, | \, s\in S, u\in V-S\}| \geq \ell \Longleftrightarrow m(V_n, G, 2)\leq |E|-\ell.$$
The inequation $m(V_n, G, 2)\leq |E|-\ell$ is equivalent to the existence of a $2$-coloring of $V_n$ such that the number of monochromatic edges in the induced coloring on $G$ is at most $|E|-\ell$.
\end{proof}

\subsection{Robust-greedy algorithm}\label{sec:greedy}

For classical graph coloring, it is well-known that there always exists a vertex ordering in any graph such that the greedy algorithm\footnote{Recall that the greedy algorithm for graph coloring considers an ordering of the vertices of the graph and assigns to each vertex its first available color (i.e., the first color that has not been assigned to any of its already colored neighbours).} gives an optimal proper coloring, that is, a proper coloring using the minimum number of colors. However, this is not true for robust coloring, and neither for equitable coloring as the example in Figure~\ref{cometa-eps-converted-to} shows. We next introduce a variation of the greedy algorithm, called the \emph{robust-greedy algorithm}, which captures the constraints of the robust colorings and gives, for some ordering of the vertices of any graph, the closest partition to the equitable partition. Further,
this algorithm will lead, together with the notion of \emph{$\alpha$-greedy orientable graph} (introduced in Section \ref{sec:greedy orientable}), to upper bounds on $m(G,H,k)$ for well-known families of graphs $G$ and arbitrary subgraphs $H$ of $\overline{G}$.

\begin{figure}[ht]
	\centering
\includegraphics[width=0.3\textwidth]{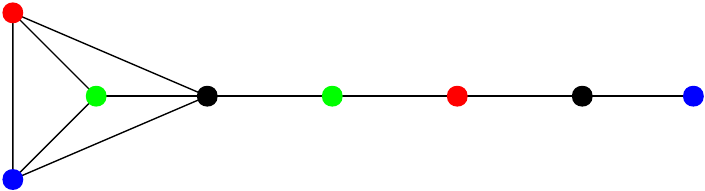}
	\caption{A graph $G$ that admits $4$-equitable colorings (as the one shown), which are $4$-robust colorings of $(G,\overline G)$. None of them can be obtained by the greedy algorithm with any of the $8!$ possible vertex orderings.}
 \label{cometa-eps-converted-to}
\end{figure}

As the classical greedy algorithm for graph coloring, the robust-greedy algorithm also processes the vertices of a graph $G$ in a given ordering, and there is an ordered list of colors (they are simply taken in order). In addition, we must keep track of  the number of monochromatic edges on a fixed subgraph $H$ of $\overline{G}$.
\begin{quote}
\noindent {\sc \small{Robust-greedy algorithm}}

Each vertex $v$ of $G$ is given the first color $c$ of the  list satisfying the two following  properties:

\begin{itemize}
    \item[(a)] color $c$ is available for $v$, i.e., it has not been assigned to the already colored neighbours of $v$ in $G$;
    \item[(b)] it minimizes, among all available colors for $v$, the number of monochromatic edges on $H$ with $v$ as an endpoint.
\end{itemize}
%If there is no color satisfying condition (a) among those already used, $c$ will be a new color from the ordered list.
The algorithm stops when all vertices of $G$ have been colored.
\end{quote}

As we pointed out before, some questions on robust coloring cannot be approached for general subgraphs $H$ of $\overline G$ since the answer would depend on the structure of $H$, and it makes then sense to set $H=\overline{G}$. This happens in the following theorem.

\begin{theorem}\label{thm:ordering}
Let $G$ be a graph, and let $k\geq \chi(G)$ be a positive integer. There always exists a vertex ordering of $G$ such that the robust-greedy algorithm provides a $k$-robust coloring of $(G, \overline{G})$.
\end{theorem}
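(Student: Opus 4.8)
The plan is to exploit the special structure of the instance $H=\overline G$ to simplify rule (b), and then to let an optimal colouring drive the vertex order. First I would observe that, when $H=\overline G$, rule (b) has a clean reformulation: if a colour $c$ is available for $v$, then no already-coloured $G$-neighbour of $v$ carries colour $c$, so every already-coloured vertex of colour $c$ is non-adjacent to $v$ in $G$ and hence joined to $v$ in $\overline G$. Consequently the number of monochromatic edges of $\overline G$ incident with $v$ created by assigning colour $c$ equals the \emph{current cardinality} of the colour class $C_c$. Thus, for $H=\overline G$, the robust-greedy rule reduces to: assign to $v$ the first available colour whose current class is smallest. By Theorem~\ref{thm:ideal}, being a $k$-robust colouring of $(G,\overline G)$ amounts to minimising $\sum_i {n_i\choose 2}$, i.e.\ producing the most balanced class sizes achievable by a proper $k$-colouring, so it suffices to exhibit one ordering on which this balanced rule reproduces the sizes of some optimal colouring.

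Next I would fix an optimal $k$-robust colouring $\phi^*$ of $(G,\overline G)$, with classes $C_1^*,\dots,C_k^*$ indexed so that $n_1^*\ge\cdots\ge n_k^*$, where $n_i^*=|C_i^*|$. I would feed the vertices to the algorithm in round-robin order: in round $t=1,2,\dots$ present the $t$-th vertex of each class $C_i^*$ with $n_i^*\ge t$, taken in increasing order of the index $i$. By the sorting, the classes active in round $t$ are exactly the prefix $C_1^*,\dots,C_{k_t}^*$ with $k_t=|\{i:n_i^*\ge t\}|$. The claim, proved by induction along this order, is that robust-greedy assigns the $t$-th vertex of $C_i^*$ precisely the colour $i$; this reproduces $\phi^*$ and hence certifies optimality.

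For the inductive step, when the $t$-th vertex $v$ of $C_i^*$ is processed the colour classes have size $t$ for indices $<i$, size $t-1$ for indices $i,\dots,k_t$, and size $n_c^*\le t-1$ for indices $c>k_t$ (these classes are already complete by the sorting). Colour $i$ is available, because by the induction hypothesis the only vertices currently coloured $i$ are earlier vertices of $C_i^*$, which are non-adjacent to $v$ in $G$; moreover it has the minimum size $t-1$ among the lower-indexed colours, which all have size $t$. The only way rule (b) could prefer a different colour is if some complete class $C_c^*$ with $c>k_t$ had current size strictly below $t-1$ and were available for $v$.

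The crux—and the step I expect to be the main obstacle—is to rule this out using the optimality of $\phi^*$. If such a $c$ were available for $v$, then $v$ would have no $G$-neighbour in $C_c^*$, so moving $v$ from $C_i^*$ to $C_c^*$ would keep the colouring proper; since $n_i^*\ge t>t-1\ge n_c^*+1$, this move would strictly decrease $\sum_i{n_i\choose 2}$ (the change equals $n_c^*-n_i^*+1<0$), contradicting the optimality of $\phi^*$. Hence no available colour of smaller size exists, the first available colour of minimum size is $i$, and robust-greedy assigns colour $i$ as claimed. Once the induction is complete, the produced colouring coincides with $\phi^*$ and is therefore a $k$-robust colouring of $(G,\overline G)$, establishing the theorem. (If $\phi^*$ leaves some of the $k$ colours unused, the same order and argument apply verbatim to its nonempty classes.)
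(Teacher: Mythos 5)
Your proof is correct, and it shares the paper's skeleton: fix a $k$-robust coloring $\phi^*$ of $(G,\overline G)$, sort its color classes by cardinality, and present the vertices to the algorithm round-robin through the classes. The difference lies in the direction of the sort and in what that buys. The paper sorts the classes by \emph{increasing} cardinality; under that ordering the robust-greedy algorithm can genuinely deviate from $\phi^*$ (once a small class is exhausted, its color ties at current size $t-1$ with the color the algorithm ``should'' use and wins the tie-break by having smaller index), so the paper must repair each deviation: stop, finish the coloring according to $\phi^*$, check via the computation behind Proposition~\ref{thm:cotainf} that the modified coloring is still optimal, and iterate, producing a sequence of robust colorings that ends at the algorithm's output. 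You sort by \emph{decreasing} cardinality, which places exhausted classes after the active ones in the color list, so every tie is broken in favor of following $\phi^*$; the only remaining threat is a complete class of strictly smaller current size, and your exchange argument (moving $v$ into it changes $\sum_i {n_i\choose 2}$ by $n_c^*-n_i^*+1\le -1$) rules it out by optimality of $\phi^*$. The payoff is a stronger and cleaner invariant: the algorithm reproduces $\phi^*$ \emph{exactly}, with no repair step --- a claim that would actually be false under the paper's increasing ordering. Your opening reduction of rule (b) to ``first available color whose current class is smallest'' when $H=\overline G$ is also correct and makes the tie-breaking analysis transparent. One small caution: the closing parenthetical about colors unused by $\phi^*$ does not apply ``verbatim'' --- an unused color would keep current size $0$ and rule (b) would prefer it, breaking your induction --- but the case never arises: under the paper's standing assumption $k<|V(G)|$, your own exchange argument shows that an optimal coloring has no empty class, since moving any vertex from a class of size at least $2$ into an empty class strictly decreases the objective.
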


\begin{proof}
Let $\phi$ be a $k$-robust coloring  of $(G, \overline{G})$, and consider the color classes $C_i=\{u_1^i, \ldots, u_{n_i}^i\}$, $1\leq i\leq k$, in which $\phi$ partitions $V(G)$. Assume that the classes are ordered by increasing cardinality: $n_i\leq n_j$ for $1\leq i<j\leq k$.

Let $\mathcal{O}$ be a vertex ordering obtained by choosing a vertex from each class $C_i$ in a cyclic way (in increasing order) until there are no vertices left in any of the classes, for example, $\mathcal{O}$ could be: $u_1^1, u_1^2, \ldots, u_1^k, u_2^1, u_2^2, \ldots, u_2^k, \ldots, u_{n_1}^1, \ldots, u_{n_1}^k, u_{n_1+1}^2, \ldots, u_{n_1+1}^k, \ldots, u_{n_k}^k$.

The robust-greedy algorithm assigns color 1 to $u_1^1$ (the same first color as $\phi$), and when it processes $u_1^2$ it may happen that: (i) $u_1^1u_1^2 \in E(G)$ and so $u_1^2$ would be assigned color 2 by condition (a) of the algorithm, or (ii) $u_1^1u_1^2 \in E(\overline{G})$ and so, by condition (b) of the algorithm, $u_1^2$ would also be assigned color 2 (this color minimizes, among colors 1 and 2, the number of monochromatic edges in $\overline{G}$ with $u_1^2$ as endpoint). Hence, the robust-greedy algorithm assigns the same colors as $\phi$ to $u_1^1$ and $u_1^2$. This argument can be extended to the first $kn_1$ vertices, to which the algorithm assigns the same colors as $\phi$ generating $k$ color classes with the same size $n_1$ (after processing the vertices $u_1^1, u_1^2, \ldots, u_1^k, \ldots, u_{n_1}^1, \ldots, u_{n_1}^k$ of the ordering $\mathcal{O}$).

Now, the algorithm could assign the same colors as $\phi$ to the remaining vertices but, if at some later stage, the robust-greedy algorithm assigns to a vertex a different color than that assigned by $\phi$, we stop the algorithm and color the remaining vertices with the same colors as $\phi$, obtaining a new $k$-coloring $\psi$. The associated partitions $P_{\phi,V(G)}$ and $P_{\psi, V(G)}$ only differ in one element: roughly speaking, one vertex has changed from a bigger color class to a smaller one. Following the same argument as for Proposition \ref{thm:cotainf}, we obtain that $m(P_{{\phi},V(G)})\geq m(P_{{\psi},V(G)})$. Since $\phi$ is a robust coloring then $m(P_{{\phi},V(G)})= m(P_{{\psi},V(G)})$. For each change of color produced by the robust-greedy algorithm, we can argue as above obtaining a sequence of $k$-robust colorings of $(G, \overline{G})$ that lead to the desired $k$-robust coloring generated by the algorithm.
\end{proof}

The analogous of Theorem \ref{thm:ordering} for equitable partitions of vertex sets is obtained from Theorem \ref{col:rec} by setting $S=V(G)$.

\begin{corollary}
For every equitable $k$-colorable graph $G$, there always exists a vertex ordering such that the robust-greedy algorithm provides an equitable $k$-coloring of its vertices.
\end{corollary}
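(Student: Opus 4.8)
The plan is to derive the corollary directly from Theorem~\ref{col:rec} and Theorem~\ref{thm:ordering} by exploiting the special structure when $S=V(G)$. First I would observe that setting $S=V(G)$ in the definition of ``equitable over $S$'' recovers exactly the usual notion of equitable coloring: the partition $P_{\phi,V(G)}=(n_1,\dots,n_k)$ consists of the full color class cardinalities $n_i=|C_i|$, and requiring $|n_i-n_j|\le 1$ for all $i,j$ is precisely the equitable condition. Thus the hypothesis that $G$ is equitably $k$-colorable says exactly that $S=V(G)$ admits an equitable $k$-coloring, which is the hypothesis needed to invoke Theorem~\ref{col:rec}.

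The key step is then a two-line chain of equivalences. By Theorem~\ref{col:rec} applied with $S=V(G)$, a $k$-coloring $\phi$ of $G$ is equitable over $V(G)$ (i.e.\ equitable) if and only if $\phi$ is a $k$-robust coloring of $(G,\overline{G}[V(G)])=(G,\overline{G})$. On the other hand, Theorem~\ref{thm:ordering} guarantees that there exists a vertex ordering for which the robust-greedy algorithm outputs a $k$-robust coloring of $(G,\overline{G})$. Composing these two facts, the coloring produced by the robust-greedy algorithm under that ordering is a $k$-robust coloring of $(G,\overline{G})$, hence by the equivalence it is an equitable $k$-coloring of $G$, which is exactly the assertion of the corollary.

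I would take care to note that the hypothesis $k\ge\chi(G)$ required by Theorem~\ref{thm:ordering} is automatically satisfied: if $G$ is equitably $k$-colorable then in particular it is (properly) $k$-colorable, so $k\ge\chi(G)$. This is the only compatibility check needed to chain the two theorems, and it is immediate. The remaining content is purely the identification $S=V(G)$, which makes $\overline{G}[S]$ literally equal to $\overline{G}$.

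I do not expect any genuine obstacle here, since the corollary is a specialization of results already established; the main (minor) point to get right is the bookkeeping that ``equitable over $V(G)$'' coincides with the standard definition of equitable $k$-coloring given in the introduction, so that Theorem~\ref{col:rec} can be read as the stated biconditional between equitability and robustness with respect to the full complement. Once that identification is made explicit, the proof is a one-step combination of Theorem~\ref{col:rec} and Theorem~\ref{thm:ordering}.
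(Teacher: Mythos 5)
Your proposal is correct and follows exactly the paper's own route: the paper derives this corollary in one line by setting $S=V(G)$ in Theorem~\ref{col:rec} and combining it with Theorem~\ref{thm:ordering}, which is precisely your chain of equivalences. Your additional check that $k\ge\chi(G)$ holds automatically is a sensible (if minor) piece of bookkeeping that the paper leaves implicit.
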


\section{Upper bounds on $m(G,H,k)$ for arbitrary $H$}\label{sec:bounds}

In this section we deal with arbitrary subgraphs $H$ of $\overline{G}$\footnote{Our results consider $H$ to be unweighted but our arguments can be easily adapted for multigraphs and graphs with rational edge weights; in the case of real edge weights, we can approximate them (using rational weights) with the desired precision.}. We first present a tight upper bound on $m(G,H,k)$ for every graph $G$, for which we need the following technical lemma, where two distinct colorings are considered, one of them not necessarily proper. Thus, to avoid any confusion, the term proper will not be omitted in Lemma \ref{lem:cota} and Theorem \ref{thm:ub}.

\begin{lemma}\label{lem:cota}
Let $t\geq 2$. For every proper $t$-coloring of a graph $G$ and every positive integer $t'\in [1,t]$ there exists a $t'$-coloring of $G$  that induces at most $|E(G)|\cdot\frac{2(t-t')}{tt'}$ monochromatic edges in $G$.
\end{lemma}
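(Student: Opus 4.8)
The plan is to start from a proper $t$-coloring and collapse it down to $t'$ colors by merging the $t$ color classes into $t'$ groups, then argue that some merging is cheap. Let me think about what merging costs in terms of monochromatic edges.

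The plan is to exploit that the given proper $t$-coloring is itself a very structured object: since it is proper, \emph{every} edge of $G$ joins two distinct color classes. Writing $C_1,\dots,C_t$ for the classes and $e_{ij}$ for the number of edges of $G$ between $C_i$ and $C_j$, we have $\sum_{i<j}e_{ij}=|E(G)|$. Any $t'$-coloring obtained by \emph{merging} the classes into $t'$ groups corresponds to a partition $\mathcal{B}=\{B_1,\dots,B_{t'}\}$ of $\{1,\dots,t\}$, and its monochromatic edges in $G$ are exactly those joining two classes placed in the same group; that is, the number of monochromatic edges equals the intra-group weight $\sum_{\ell}\sum_{i<j,\ i,j\in B_\ell} e_{ij}$. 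So the task reduces to finding a partition of the $t$ classes into $t'$ groups with small intra-group weight. First I would note the sanity check $t'=t-1$: merging just one pair $\{i,j\}$ costs $e_{ij}$, and since the $\binom{t}{2}$ pairwise weights average to $\tfrac{2|E(G)|}{t(t-1)}$, the cheapest pair already meets the bound $\tfrac{2(t-t')}{tt'}|E(G)|$. This is the special case that the general averaging must recover.

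For the general case I would average over \emph{balanced} partitions of $\{1,\dots,t\}$ into $t'$ groups (sizes $\lfloor t/t'\rfloor$ or $\lceil t/t'\rceil$). Because the uniform distribution on balanced partitions is invariant under permuting the $t$ classes, every unordered pair $\{i,j\}$ lies in a common group with the same probability $p$; summing over all pairs, $p=N/\binom{t}{2}$, where $N=\tfrac12\big(\sum_\ell|B_\ell|^2-t\big)$ is the (partition-independent) number of intra-group pairs in a balanced partition. By linearity of expectation, the expected intra-group weight is $p\,|E(G)|$, so some balanced partition — and hence some $t'$-coloring of $G$ — induces at most $p\,|E(G)|$ monochromatic edges. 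It therefore suffices to prove $p\le \tfrac{2(t-t')}{tt'}$.

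I expect the main obstacle to be exactly this last inequality, which is where the integer structure of the balanced sizes enters. Writing $t=at'+b$ with $a=\lfloor t/t'\rfloor$ and $0\le b<t'$, a short computation gives $\sum_\ell|B_\ell|^2=at+b(a+1)$, hence
$$p=\frac{t(a-1)+b(a+1)}{t(t-1)},$$
and the target inequality becomes $\dfrac{t(a-1)+b(a+1)}{t-1}\le 2(a-1)+\dfrac{2b}{t'}$, using $t-t'=(a-1)t'+b$ on the right. In the divisible case $b=0$ this is $\tfrac{t(a-1)}{t-1}\le 2(a-1)$, i.e.\ $\tfrac{t}{t-1}\le 2$, which is precisely $t\ge 2$; in the case $a=1$ it reduces to $\tfrac{1}{t-1}\le\tfrac1{t'}$, i.e.\ $t'<t$. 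The general bound follows from the same elementary estimate $\tfrac{1}{t-1}\le\tfrac{2}{t}$ (valid for $t\ge 2$) after collecting terms, so the hypothesis $t\ge2$ is exactly what drives the argument. I would finish by checking the boundary values $t'=t$ (bound $0$, keep the coloring) and $t'=1$ (bound $\ge |E(G)|$, trivially satisfied), and by recording that the produced $t'$-coloring need not be proper, which is consistent with the statement.
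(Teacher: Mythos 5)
Your proof is correct, but it takes a genuinely different route from the paper's. The paper proceeds greedily and sequentially: starting from the proper $t$-coloring, it repeatedly merges the two current color classes joined by the fewest bichromatic edges; at each step this costs at most the average $|E(G)|/\binom{u}{2}$ over the $\binom{u}{2}$ pairs of the $u$ current colors, and summing over the $t-t'$ merges telescopes exactly, since $\sum_{u=t'+1}^{t}\frac{2}{u(u-1)}=\frac{2(t-t')}{tt'}$, so the stated bound appears as an identity with no case analysis. You instead do a single global averaging over \emph{balanced} partitions of the $t$ classes into $t'$ groups, which by exchangeability yields a coloring with at most $p\,|E(G)|$ monochromatic edges, $p=\frac{t(a-1)+b(a+1)}{t(t-1)}$ with $t=at'+b$; your computation of $p$ is right, and the final inequality $p\le\frac{2(t-t')}{tt'}$ does hold by the elementary splitting you indicate (namely $\frac{t(a-1)}{t-1}\le 2(a-1)$ from $t\ge2$, and, when $b\ge1$, $\frac{a+1}{t-1}\le\frac{a+1}{at'}\le\frac{2}{t'}$ from $a\ge1$). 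What each approach buys: the paper's telescoping argument is shorter and lands exactly on the claimed constant, whereas your one-shot averaging actually proves a \emph{stronger} bound --- e.g.\ for $t=4$, $t'=2$ you get $|E(G)|/3$ versus the lemma's $|E(G)|/2$, and in general $p\le\frac{2(t-t')}{tt'}$ with strict inequality except in degenerate cases --- at the price of the extra arithmetic needed to compare $p$ with the stated constant. Since the lemma is only used in Theorem \ref{thm:ub} with its stated constant, both proofs serve the paper equally well, but it is worth noting that your argument shows the constant is not optimal.
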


\begin{proof}
The result is straightforward for $t'=1$ as $|E(G)|$ is the number of monochromatic edges induced by any $1$-coloring of $G$ and $\frac{2(t-1)}{t}\geq 1$ for $t\geq 2$. If $t'=t$ the result establishes that there are no induced monochromatic edges, which is true for any proper $t$-coloring of $G$.

Assume now that $1<t'<t$, and let $\phi$ be a proper $t$-coloring of $G$, which induces an edge coloring of $G$ (according to the colors of the endpoints of the edges) with  ${t \choose 2}$ edge color classes. On average, each of these classes contains  $\mu_0 = |E(G)|/{t\choose 2}$ bichromatic edges. Consider a  color class with smallest cardinality, say that it corresponds to color $ij$. We obtain a $(t-1)$-coloring $\phi'$ from $\phi$ by identifying colors $i$ and $j$. Observe that the number of monochromatic edges induced by $\phi'$ in $G$ is at most $\mu_0$. The same argument applies to the coloring $\phi'$ and the value $\mu_1 = |E(G)|/{t-1\choose 2}$. Thus, after a reduction of $t-t'$ colors, we obtain a $t'$-coloring of $G$ and the corresponding values $\mu_0, \mu_1, \ldots ,\mu_{t-t'+1}$; this coloring induces at most the desired number of monochromatic edges:
$$\mu_0+\mu_1+\dots+\mu_{t-t'+1}=|E(G)|\cdot\left[ \frac{1}{{t\choose 2}}+\frac{1}{{t-1\choose 2}}+\dots+\frac{1}{{t-t'+1\choose 2}}\right]=|E(G)|\cdot\frac{2(t-t')}{tt'}$$
\end{proof}

With Lemma \ref{lem:cota} in hand, we obtain an upper bound on $m(G,H,k)$ for  general graphs $G$ and  arbitrary subgraphs $H$ of $\overline G$, in terms of several parameters: the chromatic numbers of $G$ and $H$, the number of edges of $H$, and the value $k$.

\begin{theorem}\label{thm:ub}
Let $G$ be a graph with $\chi(G)=p$, and let $H$ be a subgraph of $\overline G$ with $\chi(H)=q$. For every $k\geq {\rm max}\{p,q\}$ it holds that
$$m(G,H,k)\leq \frac{2|E(H)|}{pq}\left[\frac{(p-r)(q-s)}{s}+\frac{r(q-s-1)}{(s+1)}\right],$$
where $s = \left\lfloor \frac{k}{p}\right \rfloor$ and  $r=k-ps$. Moreover, the bound is tight.
\end{theorem}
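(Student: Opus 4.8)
The plan is to combine the coloring-reduction machinery of Lemma~\ref{lem:cota} with the optimal color-class allocation established in Proposition~\ref{thm:cotainf}. The key observation is that a $k$-robust coloring of $(G,H)$ is a proper $k$-coloring of $G$, and its restriction to the vertex set of $H$ induces a (not necessarily proper) coloring of $H$; the monochromatic edges we must bound all live in $H$. So the right strategy is to produce \emph{some} $k$-coloring of $G$ whose induced coloring on $H$ uses few monochromatic $H$-edges, and then invoke that $m(G,H,k)$ is the minimum over all proper $k$-colorings, hence at most the count achieved by our explicit construction.

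First I would fix a proper $p$-coloring of $G$, which exists since $\chi(G)=p$. The idea is to split the $k$ available colors among the $p$ color classes of $G$: refine each class of $G$ internally using additional colors so that the total is $k$. Concretely, with $s=\lfloor k/p\rfloor$ and $r=k-ps$, I would assign $s+1$ colors to $r$ of the classes and $s$ colors to the remaining $p-r$ classes, so that $r(s+1)+(p-r)s=k$. Within each class of $G$ we are free to recolor, and since that class induces no $G$-edges, the only constraint is that the recoloring remain proper in $G$ (automatic, as distinct classes get disjoint color sets) while reducing monochromatic $H$-edges. This is exactly where Lemma~\ref{lem:cota} enters: restricting $H$ to the vertices of a given $G$-class gives a graph on which we have a proper $q$-coloring (the restriction of any fixed proper $q$-coloring of $H$), and Lemma~\ref{lem:cota} lets us pass from $q$ colors down to $s$ (or $s+1$) colors on that piece while controlling the induced monochromatic $H$-edges by the factor $\frac{2(q-s)}{qs}$ (respectively $\frac{2(q-s-1)}{q(s+1)}$).

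The main technical step is bookkeeping the edges of $H$ correctly. Edges of $H$ joining two different $G$-classes are automatically bichromatic because distinct $G$-classes receive disjoint palettes; only edges of $H$ with both endpoints inside a single $G$-class can become monochromatic. So I would partition $E(H)$ according to which $G$-class (if any) contains both endpoints, apply Lemma~\ref{lem:cota} class-by-class with $t=q$ and $t'=s$ or $t'=s+1$, and sum. To make the averaging clean I expect to need a global application: rather than bounding each class separately by its own edge count (which would require knowing the edge distribution among classes), I would apply the Lemma-style argument to the whole of $H$ simultaneously, using a proper $q$-coloring of $H$ and a single sequence of color-identifications, so that the coefficient $\frac{2(q-s)}{qs}$ multiplies the total $|E(H)|$ in the $s$-palette blocks and $\frac{2(q-s-1)}{q(s+1)}$ in the $(s+1)$-palette blocks. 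Weighting by the proportion of palette blocks of each type, namely $\frac{(p-r)}{p}$ and $\frac{r}{p}$, and folding in the $\frac1q$ already present yields precisely $\frac{2|E(H)|}{pq}\bigl[\frac{(p-r)(q-s)}{s}+\frac{r(q-s-1)}{(s+1)}\bigr]$.

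The hard part will be justifying that the two reductions (partitioning into $G$-classes, then reducing colors within each via the $H$-coloring) can be carried out \emph{simultaneously and consistently}, so that one genuine proper $k$-coloring of $G$ realizes all the savings at once and the averaging constants line up exactly as claimed rather than merely up to a constant factor; this requires choosing the smallest-cardinality edge-color-class at each identification step, as in Lemma~\ref{lem:cota}, but now relative to the refined palette structure. For tightness I would exhibit a family where $G$ and $H$ are chosen so that every inequality above is an equality: a natural candidate is a complete multipartite $G$ (forcing $\chi(G)=p$ with balanced classes) together with an $H$ that distributes its edges uniformly across the edge-color-classes of a proper $q$-coloring and uniformly across the $G$-classes, so that each averaging step hits its average exactly and the floor/ceiling split by $s$ and $r$ is sharp; verifying the edge counts on this example is routine once the construction is pinned down.
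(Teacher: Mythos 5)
Your construction is the same one the paper uses: fix a proper $p$-coloring of $G$, give each of its color classes a private sub-palette ($s+1$ colors to $r$ of the classes, $s$ colors to the remaining $p-r$, so $(p-r)s+r(s+1)=k$), observe that only the $H$-edges with both endpoints inside a single class --- call these edge sets $E_{m,i}$ --- can become monochromatic, and reduce the restriction of a proper $q$-coloring of $H$ inside each class via Lemma~\ref{lem:cota}. The genuine gap is the averaging step, and the repair you propose does not close it. What the class-by-class application of the lemma actually yields is
$$\frac{2(q-s)}{qs}\sum_{i\in A}|E_{m,i}|\;+\;\frac{2(q-s-1)}{q(s+1)}\sum_{i\in B}|E_{m,i}|,$$
where $A$ is the set of classes given $s$ colors and $B$ the set given $s+1$ colors. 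Replacing these partial sums by the proportional shares $\frac{p-r}{p}|E(H)|$ and $\frac{r}{p}|E(H)|$, as you do, is unjustified: nothing forces the internal $H$-edges to be spread proportionally over the classes, and if they all happened to lie inside classes of $A$ the guaranteed count would only be $\frac{2(q-s)}{qs}|E(H)|$, which strictly exceeds the claimed bound whenever $r>0$ (because $\frac{q-s}{s}>\frac{q-s-1}{s+1}$). Your fallback --- a single global sequence of color identifications on all of $H$ --- cannot rescue this: distinct $G$-classes must be reduced on disjoint palettes to \emph{different} target sizes ($s$ versus $s+1$), so there is no single identification sequence realizing both coefficients, and a uniform global reduction to $s$ colors again gives only $\frac{2(q-s)}{qs}|E(H)|$.

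The missing idea is elementary and is exactly what the paper does: choose the palette assignment \emph{after} looking at the edge distribution. Order the classes so that $|E_{m,1}|\leq\cdots\leq|E_{m,p}|$ and give the $(s+1)$-color palettes to the $r$ classes with the most internal edges. Then $\sum_{i=1}^{p-r}|E_{m,i}|\leq\frac{p-r}{p}\sum_{i=1}^{p}|E_{m,i}|$ (the smallest $p-r$ of $p$ numbers sum to at most their proportional share), and since the larger coefficient $\frac{2(q-s)}{qs}$ multiplies precisely this deficient sum, the rewriting in the paper's expression (\ref{eq:mon}) together with $\sum_{i=1}^{p}|E_{m,i}|\leq|E(H)|$ produces the stated constant exactly; no simultaneity or global consistency issue ever arises. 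Separately, your tightness discussion is a sketch rather than a proof: asserting that verification is ``routine once the construction is pinned down'' leaves the claim open, whereas the paper settles it with one explicit small example (a graph $G$ with a unique $3$-coloring, $H=\overline{G}[S]$ on three vertices, $k=3$), and some equally concrete witness must be supplied.
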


\begin{proof}
Let $\phi$ be a $p$-proper coloring of $G$ with color set $\{1,2,\ldots,p\}$ and color classes $C_1,\ldots, C_p$ that partition $V(G)$. The induced coloring on $H$ partitions $E(H)$ into the sets $E_b$ and $E_{m,i}$ that contain, respectively, the bichromatic edges and the edges with assigned color $i$ in both endpoints for $1\leq i \leq p$. Assume that the sets $E_{m,i}$ are ordered by increasing cardinality.

Let $H_i$ be the graph with vertex set $C_i$ and edge set $E_{m,i}$; this graph might have a number of isolated vertices as there could be vertices of $G$ in $C_i$ that are endpoints of no monochromatic edge in $E_{m,i}$. By construction, $E(H_i)\subset E(H)$ and so $\chi(H_i)\leq \chi(H)=q$. Thus, there exists a $q$-proper coloring of $H_i$ and, by Lemma \ref{lem:cota}, for every $1\leq i \leq p-r$ there is an $s$-coloring $\phi_i$ of $H_i$ that induces at most $|E_{m,i}|\cdot\frac{2(q-s)}{qs}$ monochromatic edges in $H_i$. Note that Lemma \ref{lem:cota} can be applied by setting $t=q$ and $t'=s$ since it is assumed that $k<{\rm min}\{|V(G)|, pq\}$, which implies $s = \left\lfloor \frac{k}{p}\right \rfloor <q$.
Analogously, for every $p-r+1\leq i\leq p$, we obtain at most $|E_{m,i}|\cdot\frac{2(q-s-1)}{q(s+1)}$ monochromatic edges in $H_i$ induced by an $(s+1)$-coloring $\psi_i$ (Lemma \ref{lem:cota} is here applied for $t=q$ and $t'=s+1$; note that $s+1\leq q$.)

The union of all $s$-colorings $\phi_i$ and all $(s+1)$-colorings $\psi_i$  is a $k$-proper coloring of $G$ since each coloring is acting on a class $C_i$ whose vertices can have the same color. Further, this union of colorings, when restricted to $V(H)$, is a $k$-coloring of $H$ with at most
$$\frac{2(q-s)}{qs}\sum_{i=1}^{p-r} |E_{m,i}|+\frac{2(q-s-1)}{q(s+1)}\sum_{i=p-r+1}^{p} |E_{m,i}|$$ monochromatic edges. This value can be rewritten as
\begin{equation}\label{eq:mon}\frac{2(q-s-1)}{q(s+1)}\sum_{i=1}^{p} |E_{m,i}|+\left[\frac{2(q-s)}{qs}-\frac{2(q-s-1)}{q(s+1)}\right]\sum_{i=1}^{p-r} |E_{m,i}|.\end{equation}
Let $\mu=\frac{1}{p}(\sum_{i=1}^{p} |E_{m,i}|)$. As $\frac{2(q-s)}{qs}>\frac{2(q-s-1)}{q(s+1)}$ and $|E_{m,i}|\leq |E_{m,j}|$ for $i\leq j$, expression (\ref{eq:mon}) is at most
$$ \frac{2(q-s-1)}{q(s+1)}p\mu+\left[\frac{2(q-s)}{qs}-\frac{2(q-s-1)}{q(s+1)}\right](p-r)\mu,$$
which is
$$\frac{2\sum_{i=1}^{p}|E_{m,i}|}{pq}\left[\frac{(p-r)(q-s)}{s}+\frac{r(q-s-1)}{s+1}\right].$$
The desired bound is thus obtained as $\sum_{i=1}^{p} |E_{m,i}|\leq |E(H)|$.
Further, it is attained by the graph $G$ and the subgraph $H$ of $\overline{G}$ in Figure \ref{cota}, and the value $k=3$; the subgraph $H$ is isomorphic to $\overline{G}[S]$ for $S=\{a,b,c\}$. We have $\chi(G)=p=3$ and $\chi(H)=q=2$. For $k=3$, we obtain $s=0$ and $r=1$. The upper bound is then equal to $1$, and $m(G,H,3)=1$ (this follows from the fact that $G$ has a unique 3-proper coloring, also shown in Figure \ref{cota}.)
\end{proof}

\begin{figure}[ht]
	\centering
\includegraphics[width=0.7\textwidth]{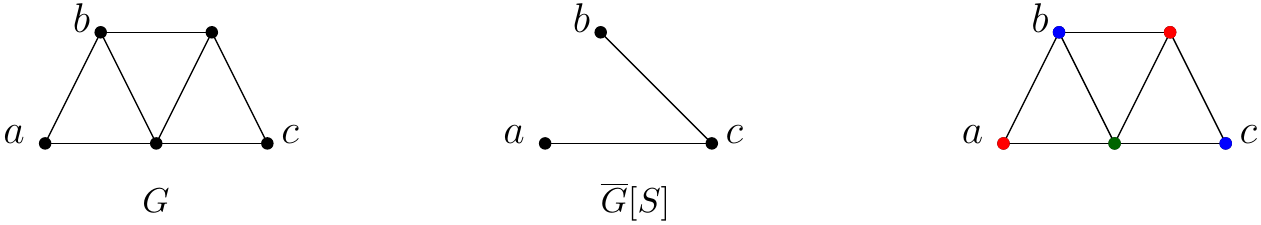}
	\caption{An example of graph $G$ that attains the bound of Theorem \ref{thm:ub} together with the graph $\overline{G}[S]$ for $S=\{a,b,c\}$ and the value $k=3$. On the right it is shown the unique $3$-proper coloring of $G$, which generates only one monochromatic edge in $\overline{G}[S]$.}
 \label{cota}
\end{figure}

\begin{remark}\label{rem:s1}
For $s = \left\lfloor \frac{k}{p}\right \rfloor=1$, which is the most frequent case (in general the value $k$ does not double the chromatic number of the graph), the upper bound of Theorem \ref{thm:ub} is
$$m(G,H,k)\leq % \frac {| E(H) |} {p} \left [(p-r) +  \frac {r(q-2)} {q} \right]=
\frac {| E(H) |} {p} \left (p-  \frac {2r} {q} \right).$$
\end{remark}

%\begin{remark} The upper bound of Theorem \ref{thm:ub} (analogously that of Remark \ref{rem:s1}) can be improved for a specific proper $p$-coloring of $G$; it suffices to replace $|E(H)|$ by $\sum_{i=1}^{p} |E_{m,i}|$. \end{remark}

\subsection{$\alpha$-greedy orientable graphs}\label{sec:greedy orientable}

We now focus on graphs that satisfy a property called $\alpha$-\emph{greedy orientable}; among this type of graphs highlight: trees, simple series-parallel graphs with chromatic number 3, and simple bipartite outerplanar graphs.
To define this key property, we first fix a vertex ordering in a graph $G$ and consider the induced orientation on its edges, that is, vertex $u$ is the tail and $v$ is the head of the oriented edge $(u,v)$ if and only if $u<v$ in the ordering. Let $\overrightarrow{G}$ be the resulting oriented graph, and let $\Delta^{-}(\overrightarrow{G})$ be its maximum in-degree.
Observe that the greedy algorithm applied to our vertex ordering gives a coloring of $G$ with at most $\Delta^{-}(\overrightarrow{G})+1$ colors, and so
$\chi(G)\leq  \Delta^{-}(\overrightarrow{G})+1$. We say that $G$ is \emph{$\alpha$-greedy orientable} for $\alpha \geq 0$ if there is an ordering of its vertices such that $\Delta^{-}(\overrightarrow{G})=\chi(G)-1+\alpha.$ Figure~\ref{alpha_greedy} illustrates an example.

Our interest is to find vertex orderings whose associated $\alpha$ is the smallest possible value; this will give interesting upper bounds on $m(G,H,k)$ as we show next.

\begin{figure}[ht]
	\centering
\includegraphics[width=0.7\textwidth]{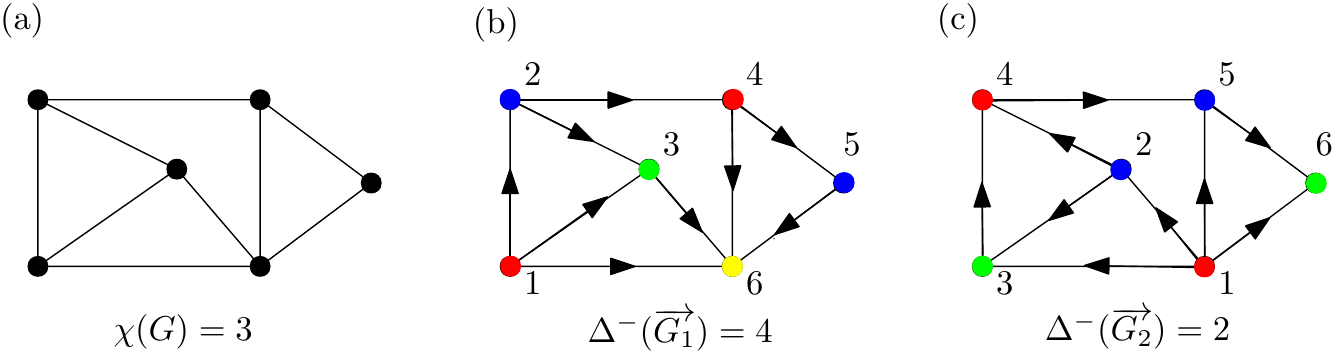}
	\caption{For the graph $G$ in (a) we consider two vertex orderings in (b) and (c), and their respective greedy colorings. The ordering in (b) shows that $G$ is $1$-greedy orientable, and $G$ is $0$-greedy orientable with the ordering in (c).} \label{alpha_greedy}
\end{figure}

Consider an $\alpha$-greedy orientable graph $G$, and let $k\geq \alpha+\chi(G)$. When we apply the robust-greedy algorithm (see Section \ref{sec:greedy}) to the vertex ordering in $G$ associated to $\alpha$, there are always $k-\Delta^{-}(\overrightarrow{G})$ available colors when a vertex $v$ is processed by the algorithm, that is, $k-\chi(G)+1-\alpha$ available colors. Further, $v$ is assigned a color that minimizes, at that stage, the number of monochromatic edges in $H$ with $v$ as endpoint. This implies that at most one $(k-\chi(G)+1-\alpha)$--th of the edges with $v$ as an endpoint are monochromatic. This proves the following upper bound for arbitrary subgraphs $H$ of $\overline{G}$.

\begin{theorem}\label{th:alphagreedy} Let $G$ be an $\alpha$-greedy orientable graph, and let $k\geq \alpha+\chi(G)$. Then,
	$$m(G,H,k)\leq \frac{|E(H)|}{k-\chi(G)+1-\alpha}$$ for every subgraph $H$ of $\overline{G}$.
\end{theorem}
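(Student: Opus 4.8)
The plan is to analyze the behavior of the robust-greedy algorithm on the vertex ordering witnessing that $G$ is $\alpha$-greedy orientable, and to bound the number of monochromatic edges in $H$ created at each vertex. Let $\mathcal{O}$ be an ordering of $V(G)$ such that the induced orientation $\overrightarrow{G}$ satisfies $\Delta^{-}(\overrightarrow{G})=\chi(G)-1+\alpha$. First I would run the robust-greedy algorithm along $\mathcal{O}$. When a vertex $v$ is processed, its already-colored neighbours in $G$ are exactly its in-neighbours in $\overrightarrow{G}$, of which there are at most $\Delta^{-}(\overrightarrow{G})=\chi(G)-1+\alpha$. Consequently, condition (a) of the algorithm forbids at most $\chi(G)-1+\alpha$ colors, leaving at least $k-(\chi(G)-1+\alpha)=k-\chi(G)+1-\alpha$ colors available for $v$. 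This quantity is positive precisely because $k\geq \alpha+\chi(G)$ guarantees $k-\chi(G)+1-\alpha\geq 1$, so the algorithm can always proceed.

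The key step is the local counting argument driven by condition (b). At the moment $v$ is colored, consider only the edges of $H$ joining $v$ to vertices of $G$ that have \emph{already} been colored (its colored $H$-neighbours according to the ordering $\mathcal{O}$); call this set $D_v$. Condition (b) assigns to $v$ a color minimizing, among the $N:=k-\chi(G)+1-\alpha$ available colors, the number of edges in $D_v$ whose other endpoint carries that same color. By an averaging (pigeonhole) argument, the minimum over $N$ colors is at most the average, so $v$ contributes at most $|D_v|/N$ monochromatic edges of $H$ when it is processed. Here I would be careful to charge each monochromatic edge of $H$ to its \emph{later} endpoint in $\mathcal{O}$: an edge $uv\in E(H)$ with $u<v$ becomes monochromatic only if $u$ and $v$ receive the same color, and this edge lies in $D_v$ at the time $v$ is processed. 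Thus summing the local bounds over all vertices counts every monochromatic edge of $H$ at most once, and the sets $D_v$ partition $E(H)$; hence $\sum_v |D_v|=|E(H)|$. This yields the total bound $\sum_v |D_v|/N = |E(H)|/N = |E(H)|/(k-\chi(G)+1-\alpha)$.

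The resulting coloring is a proper $k$-coloring of $G$ (condition (a) ensures properness, and the preceding paragraph shows no vertex ever exhausts its color list), and it induces at most $|E(H)|/(k-\chi(G)+1-\alpha)$ monochromatic edges in $H$. Since $m(G,H,k)$ is the minimum over all proper $k$-colorings of $G$, this exhibited coloring furnishes an upper bound, giving the claimed inequality. One subtlety worth flagging: the averaging in condition (b) compares the chosen color only against the available colors, and the minimum-is-at-most-average inequality requires that the number of available colors not be smaller than $N$; since I already established at least $N$ colors are available at each step, the bound $|D_v|/N$ is valid (using $N$ in the denominator is safe because more available colors can only lower the minimum).

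The main obstacle I anticipate is the bookkeeping that ensures each monochromatic edge of $H$ is counted exactly once rather than twice. The algorithm's condition (b) as stated minimizes the number of monochromatic edges \emph{with $v$ as an endpoint}, which a priori could include edges to both earlier and later neighbours; the clean $|E(H)|/N$ bound only works if we restrict attention to edges toward already-colored vertices at each step, so the crux is arguing that (b)'s local optimization still controls this restricted count. Concretely, I would note that at the time $v$ is processed, its \emph{uncolored} $H$-neighbours contribute nothing to any monochromatic count (they have no color yet), so minimizing over edges to colored neighbours is exactly what (b) does at that stage; the averaging then applies to $|D_v|$ alone, and the edge-charging scheme makes the sum telescope to $|E(H)|$ without double counting.
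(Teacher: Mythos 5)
Your proof is correct and follows essentially the same route as the paper's: run the robust-greedy algorithm on the ordering witnessing $\alpha$-greedy orientability, observe that at least $k-\chi(G)+1-\alpha$ colors are available at each step, and use the averaging argument of condition (b) to bound the monochromatic edges charged to each vertex, summing to $|E(H)|/(k-\chi(G)+1-\alpha)$. In fact your write-up is more careful than the paper's (which states the averaging and edge-charging steps only implicitly), particularly in restricting the count at each vertex to edges toward already-colored neighbours and in verifying that these sets partition $E(H)$.
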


The preceding theorem leads to upper bounds on $m(G,H,k)$ for well-known families of graphs; to apply it, we must obtain values of $\alpha$ for which these graphs are greedy orientable.

\begin{proposition}\label{pro:greedyfamilies} The following statements hold.
	\begin{itemize}
		\item[(i)] Trees are the unique bipartite graphs that are $0$-greedy orientable.
\item[(ii)] Simple series-parallel graphs with chromatic number 3 are $0$-greedy orientable.
\item[(iii)] Simple bipartite outerplanar graphs are $1$-greedy orientable.
\item[(iv)] Every $\ell$-tree\footnote{An $\ell$-tree is a graph formed by starting with a  complete graph on $(\ell + 1)$ vertices and then repeatedly adding vertices in such a way that each added vertex has exactly $\ell$ neighbors that, together, the $\ell + 1$ vertices form a clique. Note that $1$-trees are the same as unrooted trees, and $2$-trees are maximal series-parallel graphs that also include the maximal outerplanar graphs.} with chromatic number $\ell+1$ is $0$-greedy orientable.
\end{itemize}
\end{proposition}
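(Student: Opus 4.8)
The plan is to translate the condition of being $\alpha$-greedy orientable into a statement about the smallest back-degree achievable by a vertex ordering, and to control that quantity through the \emph{degeneracy} of $G$. Recall that for any ordering the greedy bound gives $\chi(G)\le \Delta^{-}(\overrightarrow{G})+1$, so $\Delta^{-}(\overrightarrow{G})\ge \chi(G)-1$ always holds; proving that a family is $0$-greedy orientable therefore amounts to exhibiting an ordering that attains this lower bound, while proving $\alpha$-greedy orientability for $\alpha\ge 1$ amounts to producing an ordering with $\Delta^{-}(\overrightarrow{G})\le\chi(G)-1+\alpha$. Two tools will do all the work. First, a \emph{lower-bound tool}: if $G$ contains a cycle then, in every ordering, the last vertex of that cycle has both of its cycle-neighbours before it, so $\Delta^{-}(\overrightarrow{G})\ge 2$; more generally the last vertex of a $K_{r}$ has back-degree at least $r-1$. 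Second, an \emph{upper-bound tool}: if every subgraph of $G$ has a vertex of degree at most $d$ (i.e.\ $G$ is $d$-degenerate), then ordering the vertices by repeatedly deleting a minimum-degree vertex and reversing the deletion order yields $\Delta^{-}(\overrightarrow{G})\le d$.

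For parts (i) and (iv) I would give the ordering directly. For a tree, root it and list the vertices so that every vertex precedes its children; then each non-root vertex has exactly one earlier neighbour, its parent, so $\Delta^{-}(\overrightarrow{G})=1=\chi(G)-1$, proving $0$-greedy orientability. Part (iv) is the natural generalisation: take the ordering in which an $\ell$-tree is built, first the vertices of the initial $K_{\ell+1}$ and then the successively added vertices; each added vertex is joined to exactly $\ell$ already-present vertices, and each clique vertex has at most $\ell$ predecessors, so $\Delta^{-}(\overrightarrow{G})=\ell=\chi(G)-1$ (using the hypothesis $\chi(G)=\ell+1$), which gives $0$-greedy orientability; part (i) is the case $\ell=1$. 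For the \emph{uniqueness} assertion in (i) I would use the lower-bound tool in reverse: if a bipartite graph is $0$-greedy orientable it has an ordering with $\Delta^{-}(\overrightarrow{G})=\chi(G)-1=1$, so no vertex has two earlier neighbours; hence $G$ contains no cycle and is a forest (a tree when connected). Thus the trees are exactly the $0$-greedy orientable bipartite graphs.

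For parts (ii) and (iii) I would invoke the structural fact that simple series-parallel graphs and outerplanar graphs are $2$-degenerate: each such graph has a vertex of degree at most $2$, and since both classes are closed under taking subgraphs, the same holds for every subgraph. The upper-bound tool then produces an ordering with $\Delta^{-}(\overrightarrow{G})\le 2$. In (ii), $\chi(G)=3$ forces $\Delta^{-}(\overrightarrow{G})\ge \chi(G)-1=2$, so the ordering attains $\Delta^{-}(\overrightarrow{G})=2=\chi(G)-1$ and $G$ is $0$-greedy orientable. In (iii), $\chi(G)=2$ gives $\Delta^{-}(\overrightarrow{G})\le 2=\chi(G)-1+1$, so $G$ is $1$-greedy orientable; for an outerplanar bipartite graph containing a cycle the lower-bound tool gives $\Delta^{-}(\overrightarrow{G})\ge 2$ in every ordering, so $\alpha=1$ is in fact optimal there.

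The routine constructions (rooted tree, construction order of an $\ell$-tree, degeneracy ordering) are immediate, so I expect the main obstacle to be the structural inputs for (ii) and (iii): establishing (or correctly citing) that simple series-parallel graphs and outerplanar graphs always contain a vertex of degree at most $2$, which underlies the $2$-degeneracy used above. A secondary point requiring care is matching the \emph{exact} value of $\alpha$ rather than merely an upper bound — this is where $\chi(G)=3$ is needed in (ii) to exclude $\Delta^{-}(\overrightarrow{G})=1$, and where the cycle argument certifies optimality in (i) and (iii); the degenerate forest cases (such as a single edge) should be noted separately.
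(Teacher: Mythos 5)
Your proposal is correct and, for parts (i) and (iv), essentially identical to the paper's proof: the paper also roots the tree and orients edges away from the root, also takes the construction order of the $\ell$-tree, and proves uniqueness in (i) by the same observation that a cycle forces some vertex to have in-degree at least $2$ in every ordering. Where you genuinely diverge is in (ii) and (iii): the paper does not use degeneracy, but instead takes the series-parallel reduction of $G$ to $K_2$ (repeatedly identifying a degree-two vertex with a neighbour and suppressing parallel edges) and labels the vertices in the reverse order of that reduction, which yields $\Delta^{-}(\overrightarrow{G})=2$; part (iii) then follows by citing that outerplanar graphs are series-parallel. Your route via $2$-degeneracy is more standard and arguably more robust: it needs only the fact that every subgraph of a simple series-parallel or outerplanar graph has a vertex of degree at most $2$, together with the min-degree peeling order, whereas the paper's ordering leans on the reduction-to-$K_2$ characterization (which, read literally, does not even apply to stars, having no degree-two vertex --- a gap your argument avoids). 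You are also more careful than the paper on two points: the definition requires the equality $\Delta^{-}(\overrightarrow{G})=\chi(G)-1+\alpha$ rather than an upper bound, and you supply the matching lower bounds ($\chi(G)=3$ in (ii), the cycle argument in (iii)); and you correctly flag that acyclic instances in (iii), such as $K_2$, fail that equality and must be excluded or treated separately, an issue present in the paper's own statement. To make your write-up self-contained you should, as you note, prove or cite the $2$-degeneracy of these two classes; with that supplied, your argument is complete.
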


\begin{proof}
	
(i) Let $T$ be a tree, viewed as a rooted tree, and  consider all edges oriented away from the root. This gives a vertex ordering satisfying that $\Delta^{-}(\overrightarrow{T})=1$ and so $T$ is $0$-greedy orientable. Now, if a bipartite graph $G$ is $0$-greedy orientable then $\Delta^{-}(\overrightarrow{G})=1$, which implies that $G$ cannot contain a cycle as any vertex ordering in the cycle would force at least one vertex to have in-degree bigger than $1$. Thus, $G$ is a tree.	

(ii)  By definition, any simple series-parallel graph $G$ can be turned into $K_2$ by a sequence of two operations: identifying any vertex $v$ of degree two with one of its adjacent vertices (so the two incident edges with $v$ are replaced by one), and deleting parallel edges (leaving just one copy). This series-parallel reduction produces a vertex ordering in $G$ such that $\Delta^{-}(\overrightarrow{G})=2$. Indeed, it suffices to label the vertices of $G$ following the sequence in the reverse order: the vertices of $K_2$ are assigned $1$ and $2$, and every time a new vertex appears in the sequence is given the following number; see Figure \ref{seriesparallel}. Since $\chi(G)=3$, then $\alpha=\Delta^{-}(\overrightarrow{G})-\chi(G)+1=0$.

(iii) The result follows from the fact that any outerplanar graph $G$ is series-parallel. Thus, we can take the same vertex ordering as above, which gives $\Delta^{-}(\overrightarrow{G})=2$. Since $G$ is also bipartite, we have $\chi(G)=2$ and $\alpha=\Delta^{-}(\overrightarrow{G})-\chi(G)+1=1$.

(iv) The inductive construction of an $\ell$-tree $G$ when $\ell=\chi(G)-1$ provides a vertex ordering in $G$ satisfying that  $\Delta^{-}(G)=\chi(G)-1$ (the vertices of the original complete graph are labelled and every vertex that is added has in-degree $\ell$, equal to the number of neighbors at that stage of the construction). Thus, $\alpha=\Delta^{-}(\overrightarrow{G})-\chi(G)+1=0$.
\end{proof}

\begin{figure}[ht]
	\centering
\includegraphics[width=0.99\textwidth]{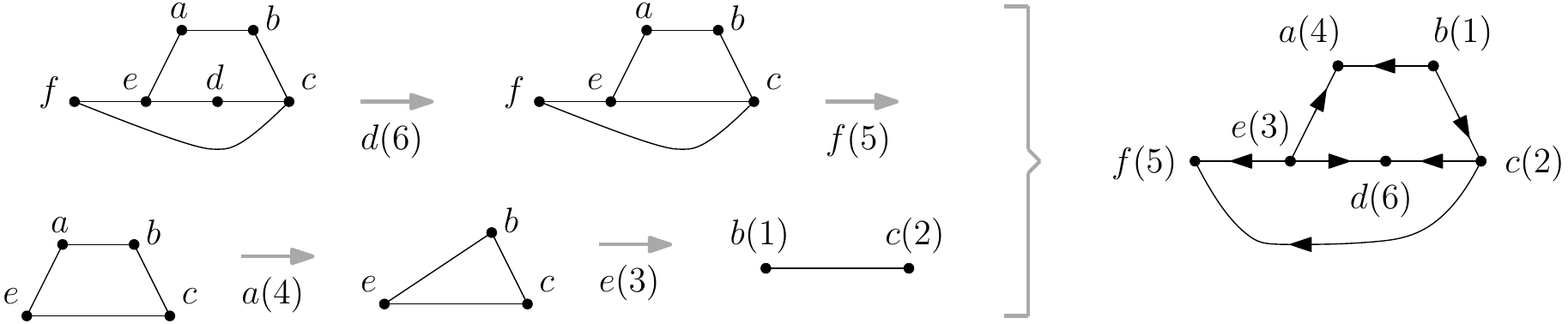}
\caption{The reverse order of the series-parallel reduction generates a vertex ordering (numbers in brackets); the associated oriented graph is shown on the right. For short, the deletion of the parallel edges $ec$ and $bc$ is not indicated. } \label{seriesparallel}
\end{figure}

 Theorem \ref{th:alphagreedy} and Proposition \ref{pro:greedyfamilies} yield the following  upper bounds.

\begin{corollary} \label{col:gfam}
	Let $H$ be any subgraph of the complement of a graph $G$.
	\begin{itemize}
		\item[(i)] If $G$ is a tree then $ m(G,H,k) \leq \frac{|E(H)|}{k-1}$.	
		\item[(ii)] If $G$ is a simple series-parallel graph with $\chi(G)=3$ then $ m(G,H,k) \leq \frac{|E(H)|}{k-2}$.
		\item[(iii)] If $G$ is a simple bipartite outerplanar graph, then $ m(G,H,k) \leq \frac{|E(H)|}{k-2}$ for $k>2$.		
		\item[(iv)] If $G$ is an $\ell$-tree with chromatic number $\ell+1$ then $ m(G,H,k)\leq\frac{|E(H)|}{k-\ell}$.
	\end{itemize}
\end{corollary}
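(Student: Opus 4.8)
The plan is to obtain all four bounds as direct instances of Theorem \ref{th:alphagreedy}, feeding in the values of $\alpha$ supplied by Proposition \ref{pro:greedyfamilies} together with the chromatic number of each family. Recall that Theorem \ref{th:alphagreedy} gives, for any $\alpha$-greedy orientable graph $G$ and any $k \geq \alpha + \chi(G)$, the bound $m(G,H,k) \leq \frac{|E(H)|}{k - \chi(G) + 1 - \alpha}$ valid for every subgraph $H$ of $\overline{G}$. Consequently the entire argument reduces to substituting the correct pair $(\alpha, \chi(G))$ into the denominator $k - \chi(G) + 1 - \alpha$ in each case; the real content has already been established in Theorem \ref{th:alphagreedy} and Proposition \ref{pro:greedyfamilies}.

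First I would dispatch the three families with $\alpha = 0$. By Proposition \ref{pro:greedyfamilies}(i) a tree $G$ is $0$-greedy orientable and satisfies $\chi(G) = 2$, so the denominator becomes $k - 2 + 1 - 0 = k - 1$, yielding (i). By part (ii), a simple series-parallel graph with $\chi(G) = 3$ is $0$-greedy orientable, and the denominator is $k - 3 + 1 = k - 2$, giving (ii). Finally, for an $\ell$-tree $G$ with $\chi(G) = \ell + 1$, part (iv) gives $\alpha = 0$, so the denominator is $k - (\ell+1) + 1 = k - \ell$, which is exactly (iv).

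Next I would treat the bipartite outerplanar case, the only one with $\alpha = 1$. By Proposition \ref{pro:greedyfamilies}(iii) such a graph is $1$-greedy orientable with $\chi(G) = 2$, so the denominator is $k - 2 + 1 - 1 = k - 2$, producing (iii). Here the hypothesis $k \geq \alpha + \chi(G) = 3$ of Theorem \ref{th:alphagreedy} coincides with the restriction $k > 2$ stated in the corollary, so the admissible range of $k$ is consistent.

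The only point requiring a moment's care, and the closest thing to an obstacle in an otherwise immediate argument, is verifying that in each case the admissible range of $k$ keeps the denominator positive and matches the hypothesis $k \geq \alpha + \chi(G)$ of Theorem \ref{th:alphagreedy}. For the three $\alpha = 0$ families this amounts to the standing assumption $k \geq \chi(G)$, which forces $k - \chi(G) + 1 \geq 1$; for the outerplanar case it is precisely the explicit condition $k > 2$. In every instance the denominator is at least $1$, so each bound is well defined, and no further computation is needed.
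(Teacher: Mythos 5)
Your proposal is correct and matches the paper's approach exactly: the paper derives Corollary \ref{col:gfam} as an immediate consequence of Theorem \ref{th:alphagreedy} combined with the values of $\alpha$ and $\chi(G)$ from Proposition \ref{pro:greedyfamilies}, which is precisely your substitution argument. Your explicit check that $k \geq \alpha + \chi(G)$ holds in each case (in particular that $k>2$ matches $k\geq 3$ for the bipartite outerplanar family) is a sound, if routine, addition.
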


\section{Robust coloring on paths}\label{subsec:paths}

The seemingly simple setting of paths reflects the complexity of dealing with robust colorings.
L\'opez-Bracho et al.~\cite{BRM05} posed the following conjecture.

%robust coloring for the case of paths, which has already been described in more detail in the Introduction. Their study on this seemingly simple setting reflects the complexity of dealing with this type of coloring. We also reflect this complexity by presenting our solution to a conjecture they posed.

\begin{conjecture}~{\rm \cite{BRM05}}\label{conj:paths}
Let $G$ and $H$ be two paths with $n$ edges on the same
vertex set. There exists a $3$-coloring of $G$ such that the
number of monochromatic edges of $H$
is at most $\left\lfloor\frac{n+1}{4}\right\rfloor$.
\end{conjecture}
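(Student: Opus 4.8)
The generic machinery already gives something: since a path is a tree, hence $0$-greedy orientable, Corollary~\ref{col:gfam}(i) yields $m(G,H,3)\le |E(H)|/2=n/2$. The conjecture asks for half of this, so the entire point is to exploit that $H$ is \emph{not} an arbitrary subgraph of $\overline{G}$ but itself a Hamiltonian path on the common vertex set (thus $|V(G)|=n+1$ and the target reads $m(G,H,3)\le\lfloor |V(G)|/4\rfloor$). I would encode an instance by the permutation recording the $G$-positions $g_0,\dots,g_n$ of the successive $H$-vertices $w_0,\dots,w_n$, and produce the coloring by running the robust-greedy algorithm of Section~\ref{sec:greedy} on a suitable vertex ordering, rather than by averaging: a short Markov-chain computation shows that a uniformly random proper $3$-coloring of $G$ makes an $H$-edge whose endpoints are at $G$-distance two monochromatic with probability $1/2$, so no per-edge averaging argument can reach $n/4$.

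First I would run the robust-greedy sweep in the natural order $v_0,v_1,\dots,v_n$ of $G$. When $v_a$ is processed, its only already-coloured $G$-neighbour is $v_{a-1}$, so exactly one colour is blocked for properness and two remain available; the algorithm then picks an available colour avoiding the colours of the already-coloured $H$-neighbours of $v_a$. A direct check shows that $v_a$ is \emph{forced} to create a monochromatic $H$-edge in only one configuration: $v_a=w_i$ is interior in $H$, both $H$-neighbours $w_{i-1},w_{i+1}$ precede $v_a$ in the $G$-order (so $g_i$ is a \emph{peak} of the sequence $g_0,\dots,g_n$), and moreover $w_{i-1},w_{i+1}$ already carry the two distinct colours different from $c(v_{a-1})$. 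Hence the number of monochromatic $H$-edges is at most the number of peaks falling into this bad colour pattern, while every $H$-endpoint and every non-peak contributes nothing.

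The number of peaks alone can be as large as $\approx n/2$, so the extra factor of two must come from the free colour choice at the non-forced vertices: whenever a valley or off-peak vertex is coloured there are two admissible colours, and I would fix a rule for this choice designed to equalise, as often as possible, the colours of the two lower $H$-neighbours flanking each peak, which destroys the bad pattern there. The quantitative heart is then to make these choices coherently so that the set of bad peaks has size at most $\lfloor(n+1)/4\rfloor$; concretely, I would partition $w_0,\dots,w_n$ into consecutive blocks of four vertices, charge each monochromatic edge to the block containing its peak, and aim to show that two nearby peaks cannot both be bad. The structural input I would use is that $H$ is a single path: two $H$-consecutive edges cannot both be of the ``distance-two, equal-colour'' type that forces a monochromatic edge, since along a path the $G$-parities of the vertices cannot stay constant across both edges (the same parity obstruction that forbids an all-distance-two Hamiltonian path). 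This is what should prevent forced edges from clustering and yield at most one bad peak per four vertices.

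The main obstacle is precisely this clustering bound. The two path structures interleave arbitrarily, so a block of four consecutive $H$-vertices is scattered in $G$, and the colour a greedy step is forced into at a peak depends on choices made much earlier at its two flanking vertices; keeping these dependencies under control across the whole sweep, and ruling out every local pattern in which two nearby peaks are simultaneously bad, will require a somewhat delicate case analysis on the relative $G$-orders of four consecutive $H$-vertices, and possibly a strengthening of the inductive statement. Finally, the extension to $H$ being a vertex-disjoint union of paths should follow by applying the same sweep, counting peaks over all components at once: deleting a vertex of $H$ only splits one $H$-path into two, so the class of linear forests is closed under the peeling used here, and the bound is obtained componentwise.
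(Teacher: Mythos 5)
Your proposal is a plan, not a proof: the step that carries the entire theorem is exactly the one you leave open. Your analysis of the forced configuration is fine (a ``peak'' $w_i$ whose two $H$-neighbours both precede it in the $G$-order and carry the two colours still available at that step), and so is your observation that per-edge averaging cannot beat $n/2$. But the quantitative heart --- that some coherent tie-breaking rule keeps the number of bad peaks below $\lfloor (n+1)/4\rfloor$ --- is deferred to ``a somewhat delicate case analysis\dots possibly a strengthening of the inductive statement,'' i.e.\ it is not proved. Worse, the one structural fact you offer in its support is false: two $H$-consecutive edges \emph{can} both join vertices at $G$-distance two. Take $H$ to contain $v_1v_3$ and $v_3v_5$: both are non-edges of the path $G$, and the $G$-parities of $v_1,v_3,v_5$ are all equal, so nothing forbids this locally. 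The parity obstruction is only global (not all $n$ edges of a spanning path can be distance-two), which is far too weak to prevent forced configurations from clustering. So the charging scheme ``at most one bad peak per block of four consecutive $H$-vertices'' currently has neither a proof nor a correct supporting lemma.

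It is also worth noting that the paper's proof does not run a one-pass forward greedy with a local rule, and this difference is not cosmetic. The paper's procedure colours left to right avoiding monochromatic edges until it meets a \emph{saturated} vertex $u_i$; it then leaves $u_i$ and the subsequent vertices uncoloured, scans ahead to the next \emph{conditioned} vertices (uncoloured endpoints of semi-coloured $H$-edges), colours those so that their semi-coloured edges are bichromatic, and only then colours \emph{backwards} to $u_i$. This lookahead-and-backward-colouring is what guarantees that each forced monochromatic edge is accompanied by at least three bichromatic $H$-edges, giving the one-per-four amortization directly and locally, with no long-range dependence on earlier choices --- precisely the dependence you admit you cannot control. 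The paper also needs two inductive reductions before this sweep is even applicable (Case 1: a unique $H$-edge crossing some position of $G$; Case 2: $u_n$ of degree three in $G\cup H$), and it handles the extension to vertex-disjoint unions of paths by a bridge decomposition governed by the parameter $s$ (the number of $G$-edges that are bridges of $G\cup H$); your componentwise remark ignores this parameter, which is what makes the general bound $\lfloor(|E(H)|+1+s)/4\rfloor$ correct. To rescue your approach you would have to prove the following missing lemma: there is a colouring procedure (necessarily with lookahead or recolouring, not a pure forward rule) under which every monochromatic $H$-edge produced can be charged to at least four distinct edges of $H$, no edge charged twice. That lemma is exactly the content of the paper's Case 3, and it is where all the work lies.
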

 The authors pointed out that their upper bound would be tight by the construction in Figure~\ref{fig:twopaths}(a). Theorem \ref{th:two_paths} below proves in the affirmative Conjecture \ref{conj:paths}. We use the number of edges in $G$ that are bridges\footnote{Recall that a \emph{bridge} is an edge whose removal disconnects the graph.} of $G\cup H$, denoted by $s$, to extend the result from $H$ being a path ($s=0$) to $H$ being a vertex-disjoint union of paths ($s>0$); see Figure \ref{fig:twopaths}. Our proof illustrates how having the control on the number of monochromatic edges induced in $H$ by a coloring of $G$ is a difficult problem even for paths.

\begin{figure}[ht]
\begin{center}
 \includegraphics[width=0.6\textwidth]{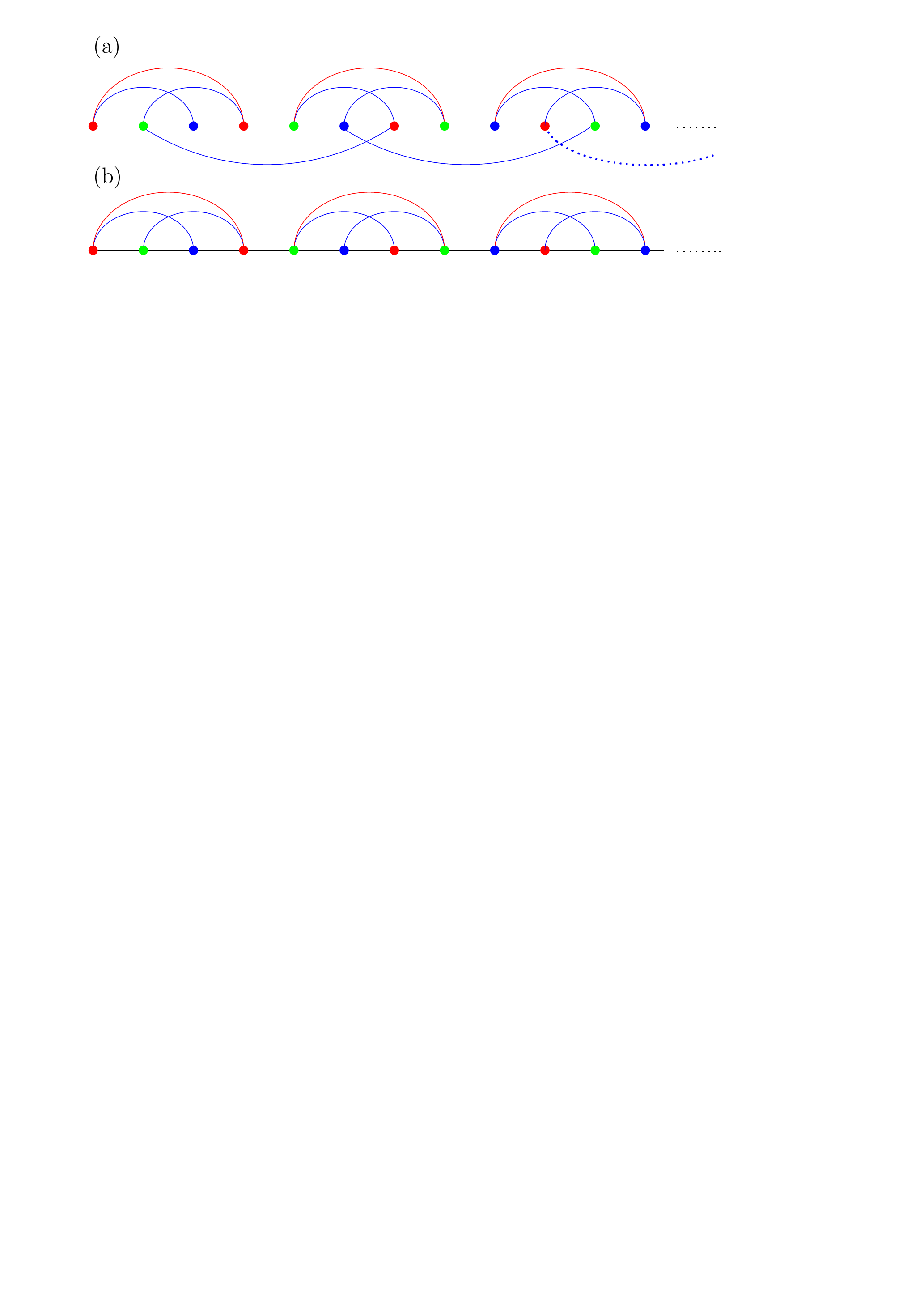}
\end{center}
\caption{(a) The construction given in \cite{BRM05}: $G$ is the horizontal path, and $H$ is a path on the same vertex set (monochromatic edges in red and  bichromatic ones in blue), (b) $G$ is the same horizontal path but now $H$ is a vertex-disjoint union of paths.}%Both constructions attain the bound of Theorem \ref{th:two_paths}.}
\label{fig:twopaths}
\end{figure}

\begin{theorem}\label{th:two_paths}
	Let $G$ be a path on $n\geq 3$ vertices, and let $H\subseteq\overline{G}$ be a vertex-disjoint union of paths. Then, $$m(G,H,3)\leq \left\lfloor\frac{|E(H)|+1+s}{4}\right\rfloor$$
where $s\geq 0$ is the number of edges in $G$ that are bridges of $G\cup H$.
Moreover, the bound is tight.
\end{theorem}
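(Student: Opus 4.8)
The plan is to reduce the statement to the bridgeless case $s=0$ by cutting $G\cup H$ at its bridges, and then to settle that case by induction on the length of the path. First observe that every bridge of $G\cup H$ is an edge of $G$: since $G$ is a spanning path, deleting any edge of $H$ leaves $G$ (hence $G\cup H$) connected, so no $H$-edge is a bridge. A $G$-edge $v_iv_{i+1}$ is a bridge precisely when no edge of $H$ joins $\{v_1,\dots,v_i\}$ to $\{v_{i+1},\dots,v_n\}$. Consequently the $s$ bridges split $G$ into $s+1$ consecutive subpaths $G_0,\dots,G_s$, and since no $H$-edge can cross a bridge, each edge of $H$ lies entirely inside one block; writing $H_p$ for the part of $H$ in $G_p$, we get $E(H)=\bigsqcup_p E(H_p)$, and by maximality each block $G_p\cup H_p$ is bridgeless.

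The key to combining the blocks is that monochromaticity of an $H$-edge is invariant under permuting the three colors. I would therefore color each block independently and optimally, then process the blocks from left to right, permuting the three colors of each block so that the unique bridge joining it to the previous block is properly colored; this repair step changes no per-block monochromatic count. Hence $m(G,H,3)\le\sum_{p=0}^{s}m(G_p,H_p,3)$. If the bridgeless case yields $m(G_p,H_p,3)\le\big\lfloor(|E(H_p)|+1)/4\big\rfloor$, then using $\sum_p\lfloor x_p\rfloor\le\lfloor\sum_p x_p\rfloor$ together with the fact that there are exactly $s+1$ blocks,
\[ \sum_{p=0}^{s}\left\lfloor\frac{|E(H_p)|+1}{4}\right\rfloor \le \left\lfloor\frac{\sum_{p=0}^{s}\bigl(|E(H_p)|+1\bigr)}{4}\right\rfloor = \left\lfloor\frac{|E(H)|+s+1}{4}\right\rfloor, \]
which is exactly the asserted bound. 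So it suffices to treat $s=0$.

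For $s=0$ the graph $G\cup H$ is $2$-edge-connected, so every cut $\{v_1,\dots,v_i\}\mid\{v_{i+1},\dots,v_n\}$ is crossed by an edge of $H$; in particular $v_1$ carries an incident $H$-edge. I would induct on $n$ by peeling a short prefix of $G$: color a constant-length initial segment so that it induces \emph{at most one} monochromatic $H$-edge while the $H$-edges incident to that segment number at least four (using a rainbow pattern $1,2,3,1,\dots$, which makes every distance-$2$ edge bichromatic, and a locally optimal choice otherwise), then delete the colored vertices and apply the induction hypothesis to the remainder, re-running the bridge reduction on what is left. The accounting must show that each forced monochromatic edge is paid for by a drop of at least $4$ in the quantity $|E(H)|+1+s$.

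The hard part will be this local peeling. Deleting a prefix can destroy $2$-edge-connectivity and create several new bridges, inflating $s$ and potentially cancelling the decrease in $|E(H)|$; at the same time a boundary vertex of $H$-degree two may force a monochromatic edge whatever color it receives, since its two $H$-neighbours and its path-neighbour can occupy all three colors. Simultaneously guaranteeing four consumed $H$-edges per created monochromatic edge \emph{and} bounding the number of bridges produced is the delicate combinatorial core, and is precisely where the complexity emphasised by the authors resides. Tightness is inherited from the construction of Figure~\ref{fig:twopaths}(a) in the case $s=0$, and for $s>0$ by concatenating copies of that gadget through bridge edges of $G$, which is what makes the $+s$ term sharp.
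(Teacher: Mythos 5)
Your reduction to the bridgeless case is correct, and it is essentially the paper's own first step in a slightly different dress: the paper removes one bridge at a time and inducts on $n$, while you cut at all $s$ bridges simultaneously and repair the bridge edges by permuting colors block by block (valid, since permuting colors does not change which $H$-edges are monochromatic, and every $H$-edge lies inside a single block). The floor inequality $\sum_p\lfloor x_p\rfloor\le\lfloor\sum_p x_p\rfloor$ is also fine. So the $s>0$ part of your proposal stands.

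The problem is that the case $s=0$ --- which is the actual substance of the theorem --- is left as a plan with a hole that you yourself flag. Your peeling scheme founders on exactly the two issues you name: (1) the $H$-edges joining the peeled prefix to the remainder are colored on one side by your local choice and on the other side by an inductive coloring of the remainder that you cannot influence, so nothing bounds how many of those crossing edges end up monochromatic; (2) deleting a prefix can create new bridges in the remainder, inflating the $+s$ term and destroying the accounting. The paper spends the bulk of its proof resolving precisely this. It splits $s=0$ into three cases: if some vertex $u_i$ has exactly one $H$-edge spanning it, split $G$ there and recolor one side so that the crossing edge is bichromatic, giving $m(G,H,3)=m(G_1,H_1,3)+m(G_2,H_2,3)$; if $u_n$ has degree three in $G\cup H$, delete a carefully chosen set of two or three $H$-edges near the right end and show the inductive coloring extends with at most one extra monochromatic edge; otherwise, run a left-to-right coloring procedure built around \emph{saturated} vertices (degree four in $G\cup H$, all three colors already on their colored neighbours) and \emph{conditioned} vertices, and prove that each saturated vertex forces exactly one monochromatic edge while consuming at least four edges of $H$, yielding $m(G,H,3)\le\left\lfloor\frac{|E(H)|}{4}\right\rfloor$. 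Some argument of this kind --- controlling the boundary edges rather than hoping induction handles them --- is indispensable, and without it your proposal does not prove the theorem; it reproves the easy reduction and restates the hard part as a goal.
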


\begin{proof}
The case $n=3$ is trivial as $m(G,H,3)=0$. It is also easy to check the bound for $n=4$ since $m(G,H,3)=0$ if $G\cup H \subset K_4$, and $m(G,H,3)=1$ if $G\cup H \simeq K_4$. (Note that we could have set $n\geq 2$ but, for $n=2$, the graph $H$ would be empty.) We may assume then that $n>4$ and  also $|E(H)|\geq 4$ (otherwise the result is straightforward).

We first prove by induction on $n$ that the result holds for $s>0$.
Let $e\in E(G)$ be a bridge of $G\cup H$. The graph $(G\cup H) \setminus e$\footnote{We use the standard notation $\widetilde{G}\setminus e$ for the graph that results from deleting an edge $e$ in a graph $\widetilde{G}$.} has two connected components, say $G_1\cup H_1$ and  $G_2\cup H_2$, such that $|E(H)|=|E(H_1)|+|E(H_2)|$ and $s=s_1+s_2+1$, where $s_i$ denotes the number of edges in $G_i$ that are bridges of $G_i\cup H_i$. Hence, $m(G,H,3)=m(G_1,H_1,3)+m(G_2,H_2,3)$, and by induction we have
	$$m(G,H,3)\leq \left\lfloor\frac{|E(H_1)|+1+s_1}{4}\right\rfloor+\left\lfloor\frac{|E(H_2)|+1+s_2}{4}\right\rfloor\leq\left\lfloor\frac{|E(H_1)|+|E(H_2)|+2+s_1+s_2}{4}\right\rfloor$$
which equals the desired upper bound. It may happen that  $H_1$ (or $H_2$) is an empty graph in which case $m(G,H,3)=m(G_2,H_2,3)$ (analogous for $H_2$).

Suppose now that $s=0$. Let $\{u_1, \ldots, u_n\}$ be the set of vertices of the path $G$ (viewed as an horizontal path) ordered from left to right.
We distinguish three cases.

\vspace{0.15cm}

\noindent\emph{Case 1}: \emph{There is a vertex $u_i$, $i\neq n$, satisfying that there exists a unique edge $u_ju_k$ in $H$ such that $j<i$ and $k>i$} (one endpoint of the edge is to the left and the other to the right of $u_i$). %See Figure \ref{twopaths_CasesAB}(a).
We proceed by induction on $n$. Let $G_1$ and $G_2$ be, respectively, the sub-paths of $G$ on vertices $\{u_1, \ldots , u_i\}$ and $\{u_{i+1}, \ldots, u_n\}$, that is, $G=G_1\cup G_2 \cup \{u_iu_{i+1}\}$.  Similarly, we consider the graph $H=H_1\cup H_2 \cup \{u_ju_k\}$, where $H_i$ is the subgraph of $H$ contained in $\overline{G_i}$. Every 3-robust coloring of $(G,H)$ can be modified to make the edge $u_ju_k$ bichromatic: it suffices to maintain the coloring of $G_1$ and change the color of $u_k$ with other color in $G_2$, if needed. Thus, $m(G,H,3)=m(G_1,H_1,3)+m(G_2,H_2,3)$, and by induction the result follows.

\vspace{0.15cm}

\noindent\emph{Case 2: Vertex $u_n$ has degree three in $G\cup H$}. %Refer to Figure \ref{twopaths_CasesAB}(b).
Again, we use induction on $n$. Let $e_1,e_2$ be the two edges of $H$ incident with $u_n$. Starting from $u_{n-1}$, from right to left, consider the two first right endpoints $u_k, u_j$, $j\leq k$, of edges in $H$; the corresponding edges are denoted, respectively, by $e_3$ and $e_4$.
Let $G_1$ be the sub-path of $G$ on vertices $\{u_1, \ldots , u_j\}$, and let $H_1\subset \overline{G_1}$ be either $H\setminus \{e_i \, | \, 1\leq i\leq 3\}$ (if $u_k\neq u_j$) or $H\setminus \{e_i \, | \, 1\leq i\leq 2\}$ (if $u_k = u_j$). The difference between a $3$-robust coloring of $(G,H)$ and one of $(G_1,H_1)$ relies on at most one edge more that can be monochromatic. Indeed, once the vertices of $G_1$ have been colored, we color the vertices from $u_n$ to $u_{j+1}$: there is one available color for $u_n$ to make $e_1$ and $e_2$ bichromatic, and the induced coloring on $e_4$ always comes from the coloring in $G_1$. Thus, $e_3$ would be the unique edge that could increase the number of monochromatic edges when $u_k\neq u_j$.
Hence, $m(G,H,3)\leq m(G_1,H_1,3)+1$, and the desired bound is obtained by induction.

%by induction, $m(G,H,3)\leq m(G_1,H_1,3)+1 \leq \left\lfloor\frac{|E(H_1)|+1}{4}\right\rfloor+1\leq\left\lfloor\frac{|E(H)|+5}{4}\right\rfloor=\left\lfloor\frac{|E(H)|+1}{4}\right\rfloor$$

%\begin{figure}[ht]
%\begin{center}
 %\includegraphics[width=0.95\textwidth]{twopaths_CasesAB}
%\end{center}
%\caption{Edges of $G$ in black, and those of $H$ in red; (a) case 1: $u_i=u_4$, $u_j=u_1$, and $u_k=u_6$; (b) case 2: $u_k=u_7$ and $u_j=u_5$.}\label{twopaths_CasesAB}
%\end{figure}

\vspace{0.15cm}

\noindent\emph{Case 3: The pair $(G,H)$ satisfies neither case 1 nor case 2}.
We present a vertex coloring procedure in which we first go from left to right assigning colors to the vertices so that as long as possible no monochromatic edge is generated in $G\cup H$. If we can color all the vertices, then $m(G,H,3)=0$; otherwise
a \emph{saturated} vertex $u_i$, $3<i<n$, is found: a vertex is saturated if its degree in $G\cup H$ is four and, when it is first visited, three of its neighbours have already been colored with the three available colors. In this case, vertex $u_i$ is not assigned a color, and we continue visiting vertices without coloring until the first \emph{conditioned} vertex $u_j$ is found: a non-colored vertex (at some stage) $u_j$ is conditioned if it is an endpoint of an edge of $H$ whose other endpoint $u_t$
 has already been colored ($t<i$); the edge $u_tu_j$ is said to be \emph{semi-colored}. Observe that at this stage vertices from $u_1$ to $u_{i-1}$ are colored, and those from $u_i$ to $u_n$ are not (including $u_j$). Note also that vertex $u_j$ must exist as $s=0$ and $u_i\neq u_n$. As an example, in Figure \ref{caso_C_i_iii_}, vertex $u_i$ is saturated and vertices $u_j$ and $u_k$ are conditioned.

We next describe how our procedure obtains a proper coloring of $G$ with the property that the total number of monochromatic edges in $H$ equals the number of saturated vertices found during the process. More concretely, when a saturated vertex is visited, there are four or five edges of $H$ involved of which only one has to be monochromatic.

If vertex $u_j$ has two semi-colored edges $e_1$ and $e_2$,  we first assign a color to $u_j$ to make  both edges bichromatic. Then, vertices from $u_{j-1}$ to $u_i$ are colored (from right to left) to maintain the proper coloring in $G$; this gives one monochromatic edge among the two edges of $H$ incident with $u_i$. Assume now that $u_j$ has a unique semi-colored edge $e_1$, and let $u_k$ be the first (from left to right) conditioned vertex in $\{u_{j+1}, \ldots ,u_n\}$; this vertex must exist since otherwise $e_1$ would be the unique edge with one endpoint to the left and the other to the right of $u_i$ (case 1). If $u_k$ has two semi-colored edges $e_2$ and $e_3$ (see Figure \ref{caso_C_i_iii_}(a)),  we first color $u_k$ and $u_j$ so that edges $e_i$, $1\leq i\leq 3$ are bichromatic. Then, from right to left, vertices $\{u_{k-1}, \ldots u_{j+1}\}$ and $\{u_{j-1}, \ldots u_{i}\}$ can be properly colored.
Finally, suppose that vertex $u_k$ has a unique semi-colored edge $e_2$.

\vspace{0.15cm}

\noindent (i) If $u_j$ has either degree 3 in $G\cup H$ or an incident edge $e_3$ with the other endpoint $u_{\ell}$ between $u_i$ and $u_j$ ($i<\ell<j$),  we first visit, from right to left, vertices $\{u_k, \ldots , u_j\}$ assigning colors to make $e_1$ and $e_2$ bichromatic. Then we color, again from right to left, $\{u_{j-1}, \ldots, u_i\}$ so that  $e_3$ (if it exists) is bichromatic; this is possible since we are coloring from right to left so, when $u_{\ell}$ is visited, there are two available colors. Refer to Figure \ref{caso_C_i_iii_}(b).

\vspace{0.15cm}

\noindent (ii) If $u_j$ has an incident edge $e_3$ with the other endpoint $u_{\ell}$ between $u_j$ and $u_k$ ($j<\ell<k$),
we randomly assign to $u_k$ one of the two colors that make $e_2$ bichromatic, and proceed to color from right to left vertices $\{u_{k-1}, \ldots ,u_j\}$ using only the color of $u_k$ and that of the other endpoint of $e_2$. Our aim is that $e_1$ and $e_3$ are bichromatic, but it may happen that, with our assignment, they can not be both bichromatic while maintaining the proper coloring in $G$; this happens for example giving color 1 to $u_k$ in Figure \ref{caso_C_i_iii_}(c). In this case, we change the color of all vertices in $\{u_{k}, \ldots ,u_{j+1}\}$ that have the same color as $u_k$ by the other color that preserves $e_2$ as bichromatic (in our example, we would change color 1 by color 3). Then, we color vertices from $u_{j-1}$ to $u_i$.

\begin{figure}[ht]
\begin{center}
 \includegraphics[width=0.9\textwidth]{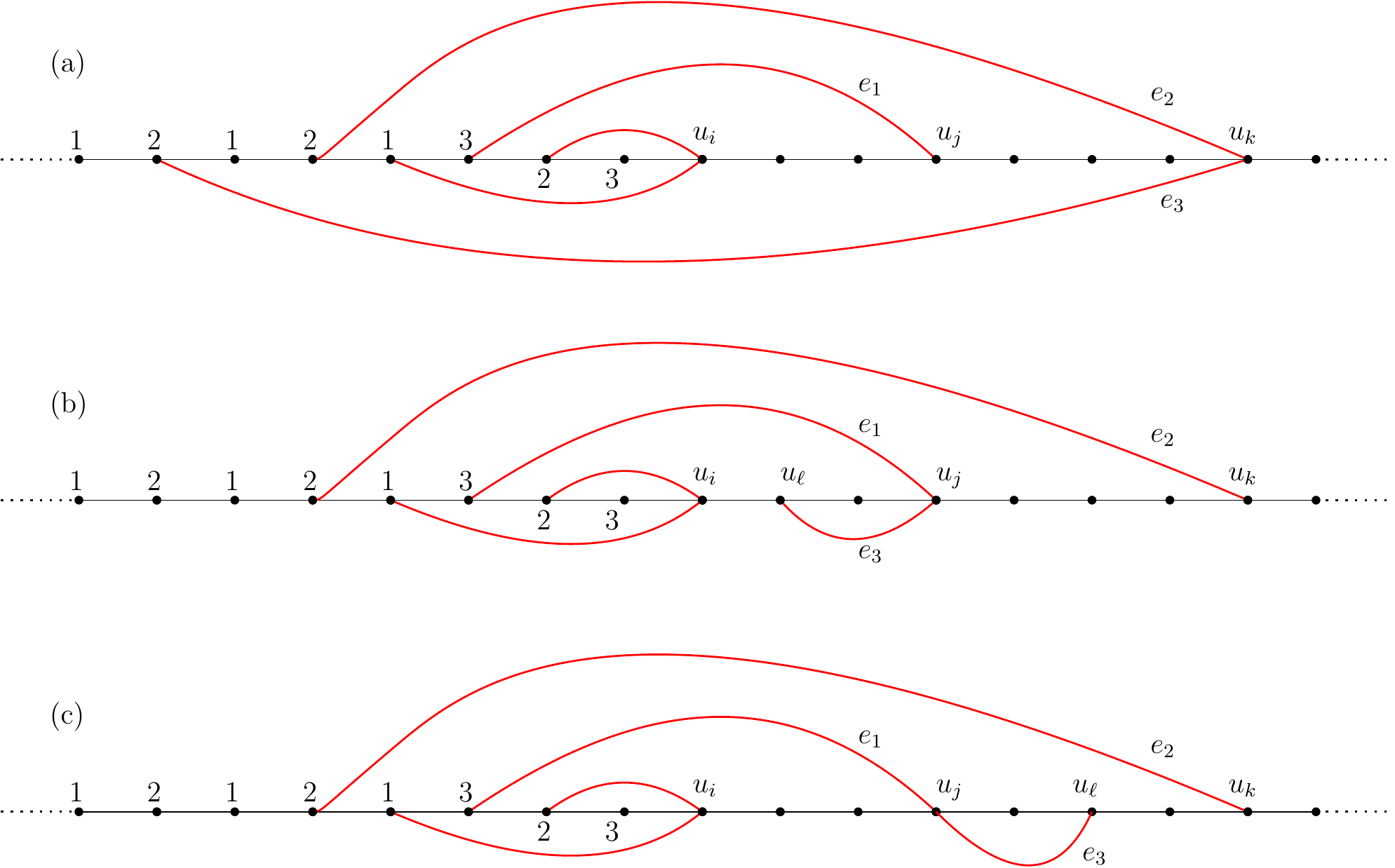}
\end{center}
\caption{Edges of $G$ in black, and edges of $H$ in red; vertices $\{u_1,\ldots, u_{i-1}\}$ have already been colored (colors $1$--$3$). Vertex $u_k$ has two semi-colored edges in (a) and one in (b) and (c); the difference between these two cases is the position of the endpoint $u_{\ell}$ of $e_3$.}\label{caso_C_i_iii_}
\end{figure}

We thus conclude that, when a saturated vertex is visited, our procedure generates one monochromatic edge in $H$, and at least three bichromatic ones. Hence, $m(G,H,3)\leq \left\lfloor\frac{|E(H)|}{4}\right\rfloor$. Figure \ref{fig:twopaths} illustrates examples for $s=0$ and $s>0$ where the bound is tight.
\end{proof}

\section{Concluding remarks}\label{sec:conclusions}

In this work we have presented the first non-heuristic study for general graphs on the robust coloring model. We delved into the connection beetween robust colorings and equitable colorings, encompassing a complexity study. We also obtained the first general bounds on the parameter $m(G,H,k)$, and solved an intriguing conjecture on paths. These are important steps on this difficult and challenging problem that leave different types of open questions for future research:

\begin{itemize}
    \item The problem of deciding whether a general graph has an equitable $k$-coloring with a given number of colors $k\geq 3$ is NP-complete \cite{FJK16}. However, it would be interesting to find a broad class of graphs for which a polynomial time algorithm could be designed. The algorithm could also be applied to robust coloring by means of Theorem \ref{col:rec}.
    \item In order to improve the upper bounds of Section \ref{sec:bounds}, we think that new techniques must be developed, rather than trying to enhance them by using a similar approach to the one presented in this paper.
    \item The proof of Theorem \ref{th:two_paths} shows the complexity of studying the RCP even for paths. Thus, for a better understanding of this coloring model, it would be worth studying if the ideas of that proof could be extended to other families of graphs.
\end{itemize}

\paragraph{Acknowledgments.}
D.G. and A.M. were supported by project PID2019-104129GB-I00/ AEI/ 10.13039/501100011033.

\small

\end{document}